\documentclass[12pt, draftclsnofoot, onecolumn]{IEEEtran}	

 \usepackage{amsmath}
 \usepackage{amssymb}
 \usepackage{subfigure}
 \usepackage{graphicx}
 \usepackage{graphics}
 \usepackage{color}
 \usepackage{psfrag}
 \usepackage{cite}
 \usepackage{balance}
 \usepackage{algorithm}
 \usepackage{accents}
 \usepackage{amsthm}
 \usepackage{bm}
 \usepackage{url}
 \usepackage{algorithmic}
 \usepackage[english]{babel}
 \usepackage{multirow}
 \usepackage{enumerate}
 \usepackage{cases}
 \usepackage{stfloats}
 \usepackage{dsfont}
 \usepackage{soul}
 \usepackage{amsfonts}
 \usepackage{fancyhdr}
 \usepackage{hhline}
 \usepackage{array}
 \usepackage{booktabs}
 \usepackage{framed}
 \usepackage{float}

\newtheorem{theorem}{Theorem}

\newtheorem{definition}{Definition}

\newtheorem{lemma}{Lemma}

\newtheorem{proposition}{Proposition}

\newtheorem{assumption}{Assumption}

\newtheorem{remark}{\bf Remark}
\def\phi{\varphi}

\def\l{\left}
\def\r{\right}
\def\({\left(}
\def\){\right)}

\setcounter{page}{1}



\def\b0{{\mathbf{0}}}








\newcommand{\nn}{\nonumber}

\makeatletter

\makeatother
\newcounter{parentnumber}



\begin{document}

\title{\Large Energy-Efficient Resource Management for Federated Edge Learning with CPU-GPU Heterogeneous Computing}


\author{
Qunsong Zeng, Yuqing Du, Kaibin Huang, and Kin K. Leung 
\thanks{Q. Zeng, Y. Du, and K. Huang are with  The University of Hong Kong, Hong Kong. K. K. Leung is with Imperial College London, UK. Contact: K. Huang (huangkb@eee.hku.hk).}
}

\maketitle
\begin{abstract}
Edge machine learning involves the deployment of learning algorithms at the network edge to leverage massive distributed data and computation resources to train \emph{artificial intelligence} (AI) models. Among others, the framework of \emph{federated edge learning} (FEEL) is popular for its data-privacy preservation. FEEL coordinates global model training at an edge server and local model training at edge devices that are connected by wireless links. This work contributes to the energy-efficient implementation of FEEL in wireless networks by designing joint \emph{computation-and-communication resource management} ($\text{C}^2$RM). The design targets the state-of-the-art heterogeneous mobile architecture where parallel computing using both a CPU and a GPU, called \emph{heterogeneous computing}, can significantly improve both the performance and energy efficiency. To minimize the sum energy consumption of devices, we propose a novel $\text{C}^2$RM framework featuring multi-dimensional control including bandwidth allocation, CPU-GPU workload partitioning and speed scaling at each device, and $\text{C}^2$ time division for each link. The key component of the framework is a set of equilibriums in energy rates with respect to different control variables that are proved to exist among devices or between processing units at each device. The results are applied to designing efficient algorithms for computing the optimal $\text{C}^2$RM policies faster than the standard optimization tools. Based on the equilibriums, we further design energy-efficient schemes for device scheduling and greedy spectrum sharing that scavenges ``spectrum holes" resulting from heterogeneous $\text{C}^2$ time divisions among devices. Using a real dataset, experiments are conducted to demonstrate the effectiveness of $\text{C}^2$RM on improving the energy efficiency of a FEEL system.
\end{abstract}
\section{Introduction}
Realizing the vision of exploiting the enormous data distributed at edge devices (e.g., smartphones and sensors) to train \emph{artificial intelligence} (AI) models has been pushing machine learning from the cloud to the network edge, called \emph{edge learning} \cite{gxzhu_2018_edge_learning}. Currently, arguably the most popular edge learning framework is federated learning that preserves users’ privacy by distributed learning over devices \cite{gxzhu2018FEEL, deniz2019federated_edge_learning, sqwang_2018_edge_learning}. Each \emph{round} of the iterative learning process involves the broadcasting of a global model to devices, their uploading of local model updates computed using local datasets, and a server’s aggregation of the received local updates for updating the global model. Executing complex learning tasks on energy and resource constrained devices is a main challenge faced in the implementation of edge learning. To tackle the challenge, the next-generation mobile \emph{systems-on-chip} (SoC) will feature a heterogeneous architecture comprising a \emph{central processing unit} (CPU) and a \emph{graphics processing unit} (GPU) [and even a \emph{digital processing unit} (DSP) in some designs] \cite{R9}. Experiments have demonstrated its advantages in terms of performance and energy efficiency. In this work, we address the issue of energy-efficient implementation of \emph{federated edge learning} (FEEL) in a wireless system comprising next-generation devices capable of heterogeneous computing. To this end, a novel framework is proposed for energy-efficient joint management of \emph{computation-and-communication} ($\text{C}^2$) resources at devices. 

\subsection{Edge Implementation of Federated Learning}
The implementation of FEEL in wireless networks faces two key challenges among others. As mentioned, the distributed learning process requires potentially many edge devices to periodically upload high-dimensional local model updates to an edge server. This includes the first challenge that the excessive communication overhead generated in the learning process can overwhelm the air interface that has finite radio resources but needs to support FEEL as well as other services. Designing techniques for suppressing the overhead forms a vein of active research, called \emph{communication-efficient FEEL}. Diversified approaches have been proposed such as device scheduling \cite{nishio2018background,yang2018background}, designing customized multi-access technologies \cite{gxzhu2018FEEL,deniz2019federated_edge_learning}, and optimizing uploading frequencies \cite{sqwang_2018_edge_learning}.

The second challenge faced by FEEL is how to execute power hungry learning tasks (e.g., training AI models each typically comprising millions of parameters) on energy constrained devices. Tackling the challenge by designing energy-efficient techniques leads to the emergence of an active research theme, called \emph{energy-efficient FEEL}, which is the topic of current investigation. While there exists a rich literature of energy-efficient techniques for \emph{resource management} (RM) in radio access networks (see e.g., \cite{liye}), the designs of their counterparts for FEEL, which are the main research focus in energy-efficient FEEL, are different due to the changes on the system objective and operations. In particular, FEEL systems aim at improving the learning performance instead of providing a radio access service and performing update aggregation instead of decoupling multiuser data. In early works \cite{QS,chen2019joint,sun2019energyaware}, researchers proposed radio-RM techniques (e.g., bandwidth allocation and device scheduling) to improve the tradeoff between devices’ transmission energy consumption and learning performance. More recent research accounts for computation energy consumption which usually constitutes a substantial part of a device’s total consumption in the learning process, motivating the design of $\text{C}^2$RM techniques in \cite{tran2019energy,yang2019energy,mo2020energyefficient}. In \cite{tran2019energy}, the tradeoffs between learning performance (in terms of accuracy and latency) and devices' energy consumption are optimized by balancing the communication and computation latencies under a total latency constraint, referred hereafter as \emph{$\text{C}^2$ time division}. In a standard FEEL design such as that in \cite{tran2019energy}, the updates by devices are usually assumed synchronized. It arises from the operation of gradient aggregation and refer to the requirement that all local updates need to be received by the server before the global model can be updated. Consequently, all devices are allowed the same duration (per-round latency) for uploading their local gradients and thus synchronized in update transmission. The assumption on synchronized updates is relaxed in \cite{yang2019energy} where the learning latency is measured by a weighted sum of individual devices’ heterogeneous latencies and then the learning-energy tradeoffs similar to those in \cite{tran2019energy} are optimized. On the other hand, the idea of clock frequency control, or called \emph{dynamic voltage and frequency scaling} (DVFS) \cite{R15}, was explored in \cite{mo2020energyefficient} for energy-efficient FEEL based on the assumption that each device has a CPU featuring DVFS. Then the $\text{C}^2$RM was optimized where the communication RM is based on either \emph{time-division multiple access} (TDMA) or \emph{non-orthogonal multiple access} (NOMA) and the computation RM is based on dynamic frequency scaling. While prior work assumes single-processor devices, CPU-GPU heterogeneous computing discussed in the sequel is a new paradigm of mobile computing supporting applications with intensive data crunching. The area of energy-efficient FEEL based on heterogeneous computing is uncharted and explored in this work. 
\subsection{CPU-GPU Heterogeneous Computing}
The mentioned next-generation heterogeneous SoC are capable of supporting diversified workloads such as communication, signal processing, inference, and learning, which arise from a wide range of new mobile applications. Examples of such chips include Snapdragon by Qualcomm, R-series by AMD, and Kirin 970 by Huawei. Via the cooperation of the CPU and GPU on the same chip for executing a single task, \emph{heterogeneous computing} on such SoC fully utilizes the computation resources of both processors to optimize the computation performance and energy efficiency \cite{R16}. In particular, the new paradigm has been demonstrated by experiments to accelerate the running of deep neural networks on smartphones \cite{R8, R18}. Existing research on heterogeneous computing focuses on several main design issues including workload partitioning \cite{R9, R11, R14, R15}, dynamic frequency scaling \cite{R5, R20}, and memory access scheduling \cite{R10}. The first two issues are addressed in this work. Specifically, workload partitioning refers to dividing and allocating workload of a task over the integrated CPU and GPU according to the task requirements and the processors' states (e.g., temperatures) and characteristics (e.g., speeds and energy efficiencies) so as to optimize the overall performance and efficiency \cite{R9, R11, R14, R15}. On the other hand, frequency scaling (or DVFS) previously considered for a single CPU \cite{tran2019energy,mo2020energyefficient} becomes the more sophisticated management in heterogeneous computing due to the joint CPU-GPU control \cite{R20}. While prior work on heterogeneous computing focuses on a single device, we study the joint control of workload partitioning and DVFS in the context of a large system comprising multiple devices and furthermore explore their integration with radio-RM. 

\subsection{Contributions}
The objective of this work is not to contribute any new learning technique but to focus on $\text{C}^2$RM to facilitate the implementation of a standard FEEL technique in a wireless network. To this end, we consider a FEEL system consisting of one edge server and multiple edge devices. Its main difference from those in existing work (see e.g., \cite{yang2019energy}) is that each device is equipped with a CPU-GPU platform enabling heterogeneous computing. The design objective is to minimize the sum energy consumption at devices under a guarantee on learning performance (latency and accuracy) by jointly controlling $\text{C}^2$RM in the following four dimensions: 
\begin{itemize}
    \item[1) ] \emph{Bandwidth allocation}: Allocating bandwidths to devices for transmission of local model updates under a constraint on the total uplink bandwidth; 
    \item[2) ] \emph{$\text{C}^2$ time division}: Dividing the allowed per-round latency for the computation and communication of each device;
    \item[3) ] \emph{CPU-GPU workload partitioning}: Partitioning and allocating the computation workload for each device to its CPU and GPU; 
    \item[4) ] \emph{CPU-GPU frequency scaling}: Controlling the CPU-GPU frequencies/speeds at each device. 
\end{itemize}
While the RM control in the first two dimensions are considered in prior work as discussed, the other two are unique for heterogeneous computing. Their joint control over multiple devices is a challenging and open problem. To the best of the authors’ knowledge, this work represents the first attempt on studying heterogeneous computing in the context of energy-efficient FEEL. The main contributions are described as follows. 
\begin{itemize}
    \item \textbf{Equilibrium based $\text{C}^2$RM framework:} It is mathematically proved that the control policies are optimal if and only if they achieve the following equilibriums:
    \begin{itemize}
        \item[1) ] The CPU-GPU pair of every device has \emph{equal energy-workload rates}, defined as the increase in energy consumption per additional workload for each processing unit; 
        \item[2) ] A similar equilibrium also exists with respect to the processing units' computation speeds as their optimal values are proved to be proportional to the corresponding workloads.
        \item[3) ] All devices have \emph{equal energy-time rates}, defined as the energy rates with respect to the communication/computation latency; 
        \item[4) ] All devices have \emph{equal energy-bandwidth rates}, defined as the energy rates with respect to the allocated bandwidth.
    \end{itemize}
The equilibriums are applied to obtaining the optimal polices for 1) computation RM, 2) communication RM, and 3) joint $\text{C}^2$RM. They are either derived in closed-form or computed using low-complexity algorithms. In particular, for low-complexity joint $\text{C}^2$RM, the problem is decomposed into one master problem for achieving the equilibrium in energy-time rates, and two sub-problems for separately achieving the two equilibriums in energy-workload rates and energy-bandwidth rates. The resultant algorithm is shown to have much lower complexity than a standard solution method such as \emph{block coordinate descent} (BCD).
\item \textbf{$\text{C}^2$ aware scheduling:} Building on the equilibrium framework, an energy-efficient scheduling scheme is designed to select a fixed number of devices for participating in FEEL. The novelty lies in the design of \emph{$\text{C}^2$ aware scheduling metric} that balances both the channel state and computation capacity of each device. Specifically, given the objective of sum energy minimization, the metric is designed to be the energy consumption of a device given the derived optimal $\text{C}^2$ time division and equal bandwidth allocation. In addition, we analyze the effect of the number of scheduled devices on model convergence. 
\item \textbf{Greedy spectrum sharing:} For the preceding designs, we assume fixed bandwidth allocation in each round of the iterative learning process. Consequently, the heterogeneous computation latencies of devices generates “spectrum holes” (unused spectrum-time blocks) that can be scavenged for further improving the energy efficiency. To this end, the scheme of greedy spectrum sharing is designed featuring a novel metric, called \emph{energy-bandwidth acceleration rate} and defined as the derivative of the energy-bandwidth rate, for selecting an available device to transmit using the spectrum hole. A larger value of the metric, implies steeper energy reduction for a device when it is allocated additional bandwidth.
\end{itemize}

\begin{figure}[t]
    \centering
    \includegraphics[width=0.95\textwidth]{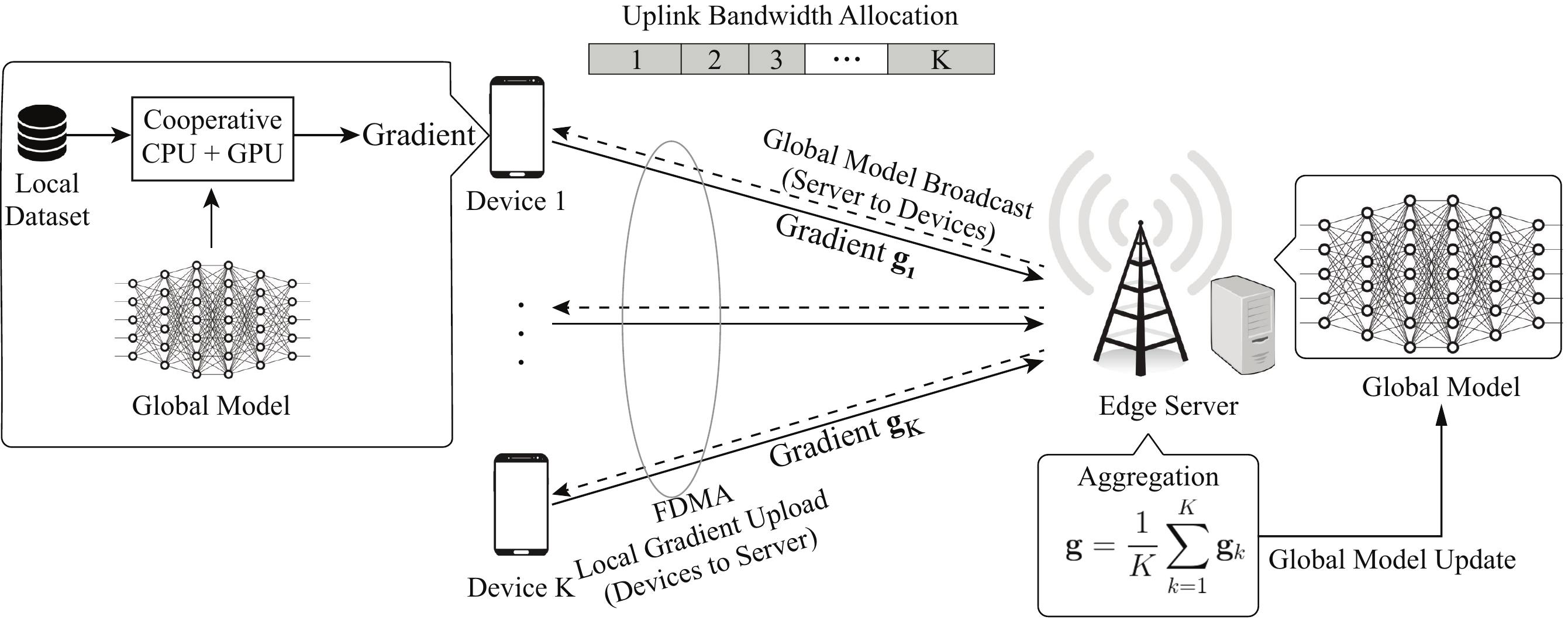}
    \caption{A wireless system supporting FEEL over devices capable of CPU-GPU heterogeneous computing.}
      \label{Fig:system_model}
\end{figure}

\section{Models and Metrics}\label{system model}
Consider the FEEL system in Fig.~\ref{Fig:system_model} comprising a single edge server and $K$ edge devices, denoted by an index set $\mathcal{K}=\{1,\cdots,K\}$. Each iteration between local gradient uploading and global model updating is called a \emph{communication round}, or \emph{round} for short. It is assumed that the server has perfect knowledge of the  multiuser channel gains and local computation characteristics, which can be obtained by feedback. Using this information, the server determines the energy-efficient strategies for device scheduling and $\text{C}^2$RM. 

\subsection{Federated Learning Model}
A standard federated learning technique (see e.g.,~\cite{federatedlearning}) is considered as follows. A global model, represented by the parameter set $\mathbf{w}$, is trained collaboratively across the edge devices by leveraging local labelled datasets. For device $k$, let $\mathcal{D}_k$ denote the local dataset and define the local loss function as
   $ F_k(\mathbf{w})=\frac{1}{|\mathcal{D}_k|}\sum_{(\mathbf{x}_j,y_j)\in\mathcal{D}_k}\ell(\mathbf{w};\mathbf{x}_j,y_j),$
where $\ell(\mathbf{w};\mathbf{x}_j,y_j)$ is the sample-wise loss function quantifying the prediction error of the model $\mathbf{w}$ on the training sample $\mathbf{x}_j$ with regard to its label $y_j$. For convenience, we assume a uniform size for local datasets: $|\mathcal{D}_k|\equiv D$, for all $k$. Then the global loss function on all the distributed datasets can be written as
\begin{equation}
    F(\mathbf{w})=\frac{\sum_{(\mathbf{x}_j,y_j)\in\cup_k\mathcal{D}_k}\ell(\mathbf{w};\mathbf{x}_j,y_j)}{|\cup_k\mathcal{D}_k|}=\frac{1}{K}\sum_{k=1}^KF_k(\mathbf{w}).
\end{equation}
The learning process is to minimize $F(\mathbf{w})$, that is, $\mathbf{w}^*=\arg\min F(\mathbf{w})$. For ease of exposition, we focus on the gradient-averaging implementation while the current designs also apply to the model-averaging implementation~\cite{federatedlearning}. In each round, say the $i$-th round, the server broadcasts the current model ${\mathbf{w}}^{(i)}$ as well as selection indicators $\{\rho_k\}$ to all edge devices, where the indicator $\rho_k=1$ if device $k$ is scheduled, or $0$ otherwise. Suppose $M$ devices are scheduled for participation in each round, denoted by an index set $\mathcal{M}_i$ for the $i$-th round. Based on the received model ${\mathbf{w}}^{(i)}$, each scheduled device calculates the gradient $\nabla F_k({\mathbf{w}}^{(i)})$ using its local dataset. Upon completion, the local gradients are transmitted to the server for aggregation:
\begin{equation}\label{gradient-aggregation}
    {\mathbf{g}}^{(i)}=\frac{1}{M}\sum_{k\in\mathcal{M}_i}\nabla F_k({\mathbf{w}}^{(i)})
\end{equation}
Synchronized updates by devices are assumed. The global model is then updated by \emph{stochastic gradient descent} (SGD) as 
    ${\mathbf{w}}^{(i+1)}={\mathbf{w}}^{(i)}-\eta {\mathbf{g}}^{(i)}$,
where $\eta$ is the learning rate. The process iterates until the model converges. 

\subsection{Model of Heterogeneous Computing}

\subsubsection{Workload model} Adopting a standard model (see e.g.,~\cite{zhang2018flops}), the total workload $W$ for a computation task is given as
$W=N_{\sf FLOP}\times D$,
where $D$ is the local dataset size and $N_{\sf FLOP}$ denotes the number of \emph{floating point operations} (FLOPs) needed for processing each sample. Furthermore, we define $f_k^{(c)}$ and ${f'}_k^{(c)}$ (in cycle/s) as the clock frequency of the CPU and GPU at device $k$, respectively. It follows that the computing speeds of CPU and GPU can be defined as $f_k = f_k^{(c)}\times n_k$ and $f'_k = {f'}_k^{(c)}\times n_k'$ with $n_k$ and $n_k'$ denoting the number of CPU and GPU FLOPs per cycle, respectively.

\subsubsection{Workload partitioning model}
For workload partitioning in local gradient computation, we consider input-sample partitioning, also known as \emph{data parallelism}~\cite{partition,parallel}. As a result, each processor (CPU or GPU) processes a fraction of input samples. The partitioned workloads for the CPU and GPU at device $k$ are denoted as $W_k$ and ${W'}_k$ respectively, with
\begin{equation}\label{InEq: workload_constrain}
    \text{(Workload constraint)}\quad W_k+{W}_k'=W,\quad \forall k\in\mathcal K.
\end{equation}
Given the partitioned workloads and CPU/GPU frequencies, the local computation time for the total workload $W$, denoted as ${t'}_k$, is given as
\begin{equation}\label{InEq: local computation time}
    \text{(Local computation time)}\quad {t}_k'=\max\left\{\frac{W_k}{f_k},\frac{{W}_k'}{{f}_k'}\right\},\quad \forall k\in\mathcal K.
\end{equation}


\subsubsection{DVFS model}
For a CMOS circuit, the power consumption of a processor can be modeled as a function of clock frequency: $P=\Psi \left[f^{(c)}\right]^3$ with the coefficient $\Psi$ [in $\text{Watt}/ (\text{cycle/s})^{3}$] depending on the chip architecture and $f^{(c)}$ being the clock frequency~\cite{liu2012dvfs}. Using this model, the power consumption of CPU and GPU at device $k$ can be written as 
\begin{equation}\label{cpu-gpu-power}
    P_k^{\text{CPU}}=\Psi_k^{\text{CPU}}\l(f^{(c)}_k\r)^3 = C_kf_k^3 \qquad\text{and}\qquad P_k^{\text{GPU}}=\Psi_k^{\text{GPU}}\l({f'}_k^{(c)}\r)^3 = G_k{f'}_k^3,
\end{equation}
where $C_k=\Psi_k^{\text{CPU}}/n_k^3$ and $G_k=\Psi_k^{\text{GPU}}/{n'}_k^3$. Using these functions, frequency scaling refers to controlling the power of CPU and GPU by adjusting their speeds ${f}_k$ and ${f}_k'$, respectively.

\begin{remark}
\emph{(CPU/GPU Computation Efficiency). The coefficient $C_k$ (or $G_k$) in \eqref{cpu-gpu-power} characterizes the \emph{computation efficiency} of a CPU (or GPU), defined as the rate of power growth in response to the increase of the cubed computing speeds. In practical data processing based on heterogeneous computing, the GPU tends to play a main role (with $G_k<C_k$) while the CPU contributes supplementary computing resources.
}
\end{remark}

\subsubsection{Energy consumption model}
Given the time durations $W_k/f_k$ and $W_k'/f_k'$ for CPU and GPU to complete their tasks with the workloads $W_k$ and $W'_k$, the resultant energy consumption at device $k$ can be written as 
\begin{equation}
    E_k^{\text{CPU}}=C_kW_kf_k^2\qquad\text{and}\qquad E_k^{\text{GPU}}=G_kW_k'{f'}_k^2.
\end{equation}
Then the total computation energy consumption of device $k$ per round is 
\begin{equation}\label{Eq: E_cmp}
    E_k^{\text{cmp}}=C_kW_kf_k^2+G_k{W}_k'{f'}_k^2,\quad \forall k\in\mathcal{K}.
\end{equation}

\subsection{Multiple-Access Model}
Consider the \emph{frequency-division multiple access} (FDMA) for gradient uploading with the total bandwidth $B$. Let $B_k$ denote the allocated bandwidth for device $k$ in an arbitrary round, which is fixed throughout the round. This assumption is relaxed in Section~\ref{Extension}. Then we have the following constraint:
\begin{equation}\label{Eq: bandwidth_constraint}
 \text{(Bandwidth constraint)}\quad \sum_{k=1}^KB_k=B.
\end{equation}
Channels are assumed to be frequency non-selective. The edge server usually recruits idling devices as workers that either static or at most moving at pedestrian speeds \cite{federatedlearning}. For this reason, we adopt the model of slow block fading. To be specific, the channel gain of device $k$, denoted as $h_k$, is assumed to remain unchanged within one round but varies \emph{independently and identically} (i.i.d.) over rounds. Given synchronous updates, a time constraint is set for each round:
\begin{equation}\label{InEq: time_constraint}
    \text{(Latency constraint)}\quad {t}_k'+t_k\leq T,\quad \forall k\in\mathcal K,
\end{equation}
where ${t}_k'$ and $t_k$ denote the time for local training/computation and gradient uploading of device $k$, respectively; $T$ is the maximum total time for one round. A prerequisite for a scheduled device is that it can meet the above time constraint. Let $P_k^{\text{TX}}$ denote the transmission power of device $k$. The achievable rate, denoted by $r_k$, can be written as
\begin{align}\label{Eq:shannon}
    r_k=B_k\log\left(1+\frac{P_k^{\text{TX}}h_k^2}{N_0}\right),\quad \forall k\in\mathcal{K},
\end{align}
where $N_0$ is the spectrum density of the complex white Gaussian channel noise. Let $L=|\mathbf{g}|\alpha$ denote the gradient size (in bit) with $\alpha$ denoting a sufficient large number of bits for quantizing each parameter with negligible distortion. Then the data rate is $r_k = L/t_k,\forall k\in\mathcal{K}$.
By combining the result and \eqref{Eq:shannon}, the communication energy consumption of device $k$ per round is
\begin{equation}\label{Eq:E_upload}
E_k^{\text{cmm}}=B_kP_k^{\text{TX}}t_k = \frac{B_kt_kN_0}{h_k^2}\left(2^{\frac{L}{B_kt_k}}-1\right),\quad \forall k\in\mathcal{K}. 
\end{equation}

\subsection{Performance Metric}
It is assumed that the data distributed at devices satisfy the condition for model convergence within a finite number of rounds, denoted as $N$ (see e.g., \cite{signSGD}). As implied by (\ref{Eq:shannon}), capacity achieving codes are deployed to ensure reliable transmissions. Consequently, wireless channels have no effects on $N$ though they affect per-round latencies and energy consumption. The objective of energy-efficient RM is to minimize the total energy consumption of all active devices in the $N$-round learning process, namely $\sum_{i=1}^N\sum_{k=1}^K\left(E_{k}^{\text{cmp}}[i]+E_{k}^{\text{cmm}}[i]\right)$ with $E_k^{\text{cmp}}$ and $E_{k}^{\text{cmm}}$ given in (\ref{Eq: E_cmp}) and (\ref{Eq:E_upload}), respectively. Given block fading and the per-round latency constraint in (\ref{InEq: time_constraint}), the objective can be straightforwardly proved to be equivalent to minimize the total energy consumption for each round. This is aligned with a standard approach in adaptive transmission and its proof is similar to those in the literature (see Lemma 1 in \cite{wen2020joint}) and thus omitted for brevity. It follows that for subsequent designs, it is sufficient to consider an arbitrary round and the corresponding total energy consumption, i.e.,  $\sum_{k=1}^K\left(E_k^{\text{cmp}}+E_k^{\text{cmm}}\right)$, as the performance metric, is called \emph{sum energy} hereafter.
\section{Energy-Efficient $\text{C}^2$ Resource Management}
In this section, considering the case that all devices are scheduled for uploading, we study: 1) computation RM, 2) communication RM, and 3) joint $\text{C}^2$RM by analyzing and computing the optimal policies. Then, the results are applied to establishing an energy-learning tradeoff. 
\subsection{Computation Resource Management}
\subsubsection{Problem Formulation}
The computation resources distributed at devices can be managed by controlling their workload partitioning and speed scaling to minimize the sum computation energy. The design is formulated as the following optimization problem.
\begin{equation*}\label{P:computation}{\bf (P1)}\quad
\begin{aligned}
    \min_{\{W_k,f_k,f_k'\}}\quad&\sum_{k=1}^K\left(C_kW_kf_k^2+G_kW_k'f_k^{'2}\right)\\
    \text{s.t.}\quad\quad&W_k+{W'}_k=W,\quad W_k\geq 0,\quad {W'}_k\geq 0,\quad \forall k\in\mathcal{K},\\
    &{t}_k'=\max\left\{\frac{W_k}{f_k},\frac{W'_k}{f_k'}\right\}, \quad \forall k\in\mathcal{K},\\
    &0\leq{t}_k'\leq {T'}_k, \quad \forall k\in\mathcal{K}.
\end{aligned}
\end{equation*}
where the three sets of constraints follow from \eqref{InEq: workload_constrain}, \eqref{InEq: local computation time}, and the time constraint with $T_k'$ being the allowed maximum computation time for device $k$.

\subsubsection{Optimal Policy}
The optimal policy for computation RM is derived in closed-form below.
\begin{lemma}\label{lemma: computation condition}
(Speed Scaling Rule). \emph{To minimize the sum computation energy, the computation speeds of CPU and GPU, $f_k$ and $f'_k$, should be scaled by following }
\begin{align}
   \frac{W_k}{f_k}=\frac{W_k'}{f_k'},\quad \forall k\in\mathcal{K},
\end{align}
\emph{given an arbitrary workload pair $(W_k, W_k^{'})$ with $W = W_k+W_k'$.}
\end{lemma}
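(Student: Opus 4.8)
The plan is to freeze an arbitrary feasible workload split $(W_k, W_k')$ with $W_k + W_k' = W$ and minimize the energy over the speeds alone; the stated rule then emerges as the unique minimizer. Since both the objective $\sum_{k=1}^K\left(C_kW_kf_k^2 + G_kW_k'f_k^{'2}\right)$ and every constraint of (P1) decouple across $k$, it suffices to treat a single device and minimize $C_kW_kf_k^2 + G_kW_k'f_k^{'2}$ over $f_k, f_k' > 0$ subject to $\max\{W_k/f_k,\, W_k'/f_k'\} \le T_k'$.

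The structural fact I would exploit is that, once the workloads are fixed, the energy splits into the two terms $C_kW_kf_k^2$ and $G_kW_k'f_k^{'2}$, each depending on a single speed and each strictly increasing in that speed (for $C_k, G_k, W_k, W_k' > 0$). Hence, in isolation, every speed ``wants'' to be as small as possible; the only thing that stops both speeds from collapsing to zero is the latency constraint, which is the sole coupling between the two processors.

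The key step is to rewrite that constraint. Because $\max\{a,b\}\le c$ is equivalent to $a\le c$ and $b\le c$, the latency constraint decouples into the two independent lower bounds $f_k \ge W_k/T_k'$ and $f_k' \ge W_k'/T_k'$. Minimizing a strictly increasing function over such a box is attained at its lower corner, so the optimal speeds are $f_k = W_k/T_k'$ and $f_k' = W_k'/T_k'$. Substituting back gives $W_k/f_k = T_k' = W_k'/f_k'$, which is exactly the claimed rule (and shows, as a by-product, that both units run for the full allotted time $T_k'$, i.e.\ they finish simultaneously). Equivalently, one can argue by perturbation: if the finishing times were unequal, say $W_k/f_k < W_k'/f_k'$, then $t_k'$ would be set by the GPU alone and the CPU could be slowed (its $f_k$ lowered until $W_k/f_k = W_k'/f_k'$), strictly reducing $C_kW_kf_k^2$ while keeping $\max\{W_k/f_k, W_k'/f_k'\}$ unchanged and feasible, contradicting optimality; the symmetric case slows the GPU.

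The step I expect to require the most care is the treatment of the $\max$: one must verify that the two decoupled bounds are genuinely equivalent to the original constraint and that the corner is feasible, and in the perturbation view that slowing the earlier-finishing unit keeps $t_k' \le T_k'$ and does not raise the latency. I would also flag the degenerate cases in which one workload vanishes (say $W_k' = 0$): the idle processor is then irrelevant, only the active unit is scaled to $f_k = W_k/T_k'$, and the equality is understood to range over the units actually assigned work. Away from these boundary cases the argument is immediate from monotonicity and the decoupling of the $\max$.
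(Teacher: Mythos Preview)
Your proposal is correct and your perturbation variant is exactly the paper's own proof: the paper fixes a case, say $W_k/f_k \ge W_k'/f_k'$, sets $t_k' = W_k/f_k$, notes that $f_k' \ge W_k'/t_k'$, and observes that $C_kW_kf_k^2 + G_kW_k'f_k'^2$ is minimized (with equality) precisely when $f_k' = W_k'/t_k'$, i.e.\ the finishing times coincide. Your first ``box-decoupling'' variant is a clean repackaging of the same monotonicity idea; it additionally pins down $t_k'^\ast = T_k'$ in one shot, which the paper derives separately after Lemma~2.
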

\begin{proof}
    See Appendix \ref{proof: computation condition}.
\end{proof}

{The optimal strategy of workload-proportional speed scaling in Lemma 1 is intuitive and results from equalizing CPU and GPU computation time to avoid the slower one becoming the bottleneck of local gradient computation.}
Then, with Lemma~\ref{lemma: computation condition}, the original Problem {\bf P1} can be rewritten as follows:
\begin{equation*}\label{P:computation}{\bf (P2)}\quad
\begin{aligned}
    \min_{\{W_k,t_k'\}}\quad&\sum_{k=1}^K \frac{C_kW_k^3+G_k{W}_k^{'3}}{{t}_k^{'2}}\\
    \text{s.t.}\quad\quad&W_k+{W}_k'=W,\quad W_k\geq 0,\quad {W}_k'\geq 0,\quad \forall k\in\mathcal{K},\\
    &0\leq{t}_k'\leq {T}_k', \quad \forall k\in\mathcal{K}.
\end{aligned}
\end{equation*}

Next, it is found that the optimal policy  achieves an equilibrium between the CPU and GPU on each device, in terms of \emph{energy-workload rate}\footnote{To be mathematical rigor, we note that the notion of rate can only be defined by the derivative function. However, it can be proved that all the partial derivative functions appeared in this manuscript are equivalent to the derivative ones. Thereby, for consistency of notation, we define the rate with respect to partial derivative hereafter.} .
\begin{lemma}\label{lemma: energy-load rate}
(Energy-Workload Rate Equilibrium). \emph{Consider the \emph{energy-workload rates} of CPU and GPU given as 
\begin{align}\label{Def: energy-load rate}
\frac{\partial E_k^{\text{CPU}}}{\partial W_k}=\frac{3C_kW_k^2}{{t}_k^{'2}} \quad \text{and}\quad \frac{\partial E_k^{\text{GPU}}}{\partial {W}_k'}=\frac{3G_k{W}_k^{'2}}{{t}_k^{'2}},\quad \forall k\in\mathcal{K},
\end{align}
where $E_k^{\text{CPU}} = \frac{C_kW_k^3}{{t}_k^{'2}}$ and $E_k^{\text{GPU}} = \frac{G_k{W}_k^{'3}}{{t}_k^{'2}}$ denote the computation energy of CPU and GPU, respectively. Then, the necessary and sufficient condition for optimal workload allocation is
\begin{align}\label{Equilibrium: energy-load rate}
    \frac{\partial E_k^{\text{CPU}}}{\partial W_k}=\frac{\partial E_k^{\text{GPU}}}{\partial W_k'},\quad \forall k\in\mathcal{K},
\end{align}
with $W_k+{W}_k'=W$.
}
\end{lemma}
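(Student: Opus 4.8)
The plan is to exploit Lemma~\ref{lemma: computation condition}, which lets me replace the speeds and write the per-device computation energy in the reduced form $E_k = (C_kW_k^3 + G_kW_k'^3)/{t_k'}^2$ appearing in {\bf P2}. I would then show that the equilibrium \eqref{Equilibrium: energy-load rate} is exactly the first-order stationarity condition of a \emph{convex} scalar subproblem, so that it is simultaneously necessary and sufficient for optimality.

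First I would observe that the workload-allocation subproblem decouples across devices: for any fixed latency profile $\{t_k'\}$, the objective of {\bf P2} is a sum of per-device terms and the constraint $W_k+W_k'=W$ couples only the two workloads of a single device. Hence minimizing over $\{W_k\}$ reduces to $K$ independent scalar problems, and it suffices to analyze one device. Eliminating $W_k'$ via $W_k'=W-W_k$ gives the one-variable objective $E_k(W_k)=\left(C_kW_k^3+G_k(W-W_k)^3\right)/{t_k'}^2$ on the interval $[0,W]$. Note that $t_k'$ enters only as a positive multiplicative factor, so the optimal partitioning is independent of the chosen latency, which is why the equilibrium can be stated for an arbitrary $t_k'$.

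Next I would establish convexity by computing $\mathrm{d}^2E_k/\mathrm{d}W_k^2 = 6(C_kW_k+G_kW_k')/{t_k'}^2 > 0$ on the feasible interval (the bracket is strictly positive whenever $W>0$), so $E_k$ is strictly convex in $W_k$. For a strictly convex objective on a convex feasible set, an interior stationary point is the unique global minimizer. Setting $\mathrm{d}E_k/\mathrm{d}W_k=0$ and using the chain rule $\mathrm{d}W_k'/\mathrm{d}W_k=-1$ yields $3C_kW_k^2/{t_k'}^2 = 3G_kW_k'^2/{t_k'}^2$, which is precisely \eqref{Equilibrium: energy-load rate} once the two sides are identified with the energy-workload rates in \eqref{Def: energy-load rate}. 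Solving this equation gives $W_k=\tfrac{\sqrt{G_k}}{\sqrt{C_k}+\sqrt{G_k}}W$ and $W_k'=\tfrac{\sqrt{C_k}}{\sqrt{C_k}+\sqrt{G_k}}W$, both strictly positive for $W>0$, confirming that the stationary point is interior, that the nonnegativity constraints are inactive, and that no boundary analysis is needed. Convexity then upgrades stationarity to a full equivalence: \eqref{Equilibrium: energy-load rate} holds if and only if $W_k$ is globally optimal, which is the claimed necessary-and-sufficient characterization.

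The main obstacle I anticipate is the bookkeeping of the two distinct notions of ``rate.'' The quantities in \eqref{Def: energy-load rate} are partial derivatives that treat $E_k^{\text{CPU}}$ and $E_k^{\text{GPU}}$ as independent functions of $W_k$ and $W_k'$, whereas optimality is governed by the total derivative of $E_k$ under the coupling $W_k+W_k'=W$. Reconciling these — i.e.\ justifying that equating the partial-derivative rates is equivalent to annihilating the constrained total derivative (the point flagged in the footnote) — is the step requiring care. I would make it rigorous either through the substitution above or, equivalently, by writing the KKT conditions of {\bf P2} with a multiplier $\lambda_k$ for $W_k+W_k'=W$, deriving $3C_kW_k^2/{t_k'}^2+\lambda_k=0$ and $3G_kW_k'^2/{t_k'}^2+\lambda_k=0$, and eliminating $\lambda_k$ to recover \eqref{Equilibrium: energy-load rate}.
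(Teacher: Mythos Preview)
Your proposal is correct. Your primary route --- eliminate $W_k'$ via the constraint, verify strict convexity of the resulting scalar map, and identify the interior stationary point with \eqref{Equilibrium: energy-load rate} --- is a slightly more elementary packaging than what the paper does. The paper instead writes the partial Lagrangian of {\bf P2} with multipliers for both the equality $W_k+W_k'=W$ and the nonnegativity constraints $W_k,W_k'\ge 0$, then uses complementary slackness to argue that neither inequality can be active (so both processors work), from which the equilibrium drops out by eliminating the equality multiplier. This is exactly the KKT alternative you sketch in your final paragraph. Your substitution argument buys you the interior/boundary analysis for free via the explicit formula $W_k=\sqrt{G_k}W/(\sqrt{C_k}+\sqrt{G_k})>0$, whereas the paper's KKT route makes the ``both processors active'' conclusion explicit through the slackness conditions; either way the content is the same.
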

\begin{proof}
    See Appendix \ref{proof: energy-load rate}.
\end{proof}
Furthermore, one can observe that the objective function in {\bf P2} is a non-increasing function in ${t'}_k,\forall k\in\mathcal{K}$. Therefore, the optimality requires to maximize the computation time of each device, resulting in ${t'}_k^*={T'}_k,\forall k\in\mathcal{K}$, which is independent of the workload partitioning. Using the result as well as Lemmas 1 and 2, the optimal computation RM policy is obtained as follows.
\begin{proposition}\label{proposition: computation load allocation}
(Optimal Computation RM). \emph{The optimal workload allocation is
\begin{align}\label{Eq: Optimal_Workload_Allocation}
\text{(Optimal Workload Allocation)}\quad  W_k^*=\frac{\sqrt{G_k}W}{\sqrt{C_k}+\sqrt{G_k}},\quad W_k'^*=\frac{\sqrt{C_k}W}{\sqrt{C_k}+\sqrt{G_k}}, \quad k\in\mathcal{K},
\end{align}
where $C_k$ and $G_k$ are computation coefficients for CPU and GPU, respectively.
Moreover, the optimal speed scaling for CPU and GPU is
\begin{align}\label{proposition: frequency scaling}
 \text{(Optimal Speed Scaling)}\quad    f_k^*= \frac{W_k^*}{{T'}_k} ,\quad f_k'^*=\frac{W_k'^*}{{T'}_k}, \quad k\in\mathcal{K},~\qquad\qquad\quad\quad
\end{align}
where $T_k'$ is the allowed maximum computation time for device $k$.
}
\end{proposition}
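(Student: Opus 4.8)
The plan is to assemble the three ingredients already in place—Lemma~\ref{lemma: computation condition} (the speed-scaling rule), Lemma~\ref{lemma: energy-load rate} (the energy-workload equilibrium), and the monotonicity observation stated just before the proposition—and thereby reduce {\bf P2} to an elementary per-device problem whose minimizer can be written in closed form. The key structural fact I would lean on is that {\bf P2} fully separates across devices, and within each device the computation time $t_k'$ and the workload split $(W_k, W_k')$ enter in decoupled positions: $t_k'$ appears only in the denominator and the workload only in the numerator of $\frac{C_k W_k^3 + G_k {W'}_k^3}{t_k'^2}$.

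First I would fix the computation time. For any admissible workload pair the numerator is strictly positive (since $W_k + W_k' = W > 0$), so each device's objective contribution is strictly decreasing in $t_k'$; hence the optimum saturates the latency bound, $t_k'^* = T_k'$, independently of how the workload is partitioned. This is exactly the monotonicity remark preceding the statement. With $t_k'$ pinned at $T_k'$, minimizing the objective reduces to minimizing the numerator $C_k W_k^3 + G_k {W'}_k^3$ subject to $W_k + W_k' = W$ and non-negativity.

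Next I would invoke Lemma~\ref{lemma: energy-load rate}, which declares the equilibrium $\frac{\partial E_k^{\text{CPU}}}{\partial W_k} = \frac{\partial E_k^{\text{GPU}}}{\partial W_k'}$ necessary and sufficient for optimal allocation. Substituting the explicit rates from \eqref{Def: energy-load rate} and cancelling the common positive factor $3/t_k'^2$, the equilibrium collapses to $C_k W_k^2 = G_k {W'}_k^2$, i.e.\ $\sqrt{C_k}\, W_k = \sqrt{G_k}\, W_k'$ after taking positive roots (admissible since $W_k, W_k' \geq 0$). Paired with the constraint $W_k + W_k' = W$, this is a linear $2\times 2$ system whose unique solution is precisely \eqref{Eq: Optimal_Workload_Allocation}. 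I would also remark that because $C_k, G_k > 0$ the solution is strictly interior, so the non-negativity constraints are inactive and the equilibrium point—rather than a boundary point with $W_k = 0$ or $W_k' = 0$—is indeed the global minimizer, consistent with the sufficiency asserted in Lemma~\ref{lemma: energy-load rate}.

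Finally I would recover the speeds. Lemma~\ref{lemma: computation condition} equalizes the CPU and GPU completion times, $t_k' = W_k/f_k = W_k'/f_k'$; evaluating this at $t_k'^* = T_k'$ together with the optimal $W_k^*, W_k'^*$ yields $f_k^* = W_k^*/T_k'$ and $f_k'^* = W_k'^*/T_k'$, which is \eqref{proposition: frequency scaling}. I do not anticipate a genuine obstacle, since Lemmas~\ref{lemma: computation condition} and~\ref{lemma: energy-load rate} do the heavy lifting; the one point warranting care is the bookkeeping that licenses optimizing $t_k'$ and the workload split \emph{independently}, which is justified purely by the separable denominator/numerator structure noted above.
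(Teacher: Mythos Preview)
Your proposal is correct and follows exactly the route the paper intends: combine the monotonicity observation ($t_k'^* = T_k'$) with Lemma~\ref{lemma: energy-load rate} to solve for the workload split, then recover the speeds via Lemma~\ref{lemma: computation condition}. The paper does not spell out these steps explicitly---it simply states that the proposition follows from the two lemmas and the monotonicity remark---and your argument fills in precisely those details.
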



\begin{remark}
(Energy-Efficient CPU-GPU Heterogeneous Computing). \emph{According to Proposition~\ref{proposition: computation load allocation}, more workload tends to be allocated to the processor with a smaller computation coefficient indicating a higher computation efficiency (see Remark 1), thereby reducing the devices' energy consumption.}
\end{remark}
\subsection{Communication Resource Management}
\subsubsection{Problem Formulation}
The total radio resources are managed by controlling bandwidth allocation and transmission time to minimize the sum transmission energy. The corresponding optimization problem can be formulated as
\begin{equation}\label{opt_rrm}{\bf (P3)}\quad
\begin{aligned}
    \min_{\{B_k,t_k\}}\quad&\sum_{k=1}^K\frac{B_k t_kN_0}{h_k^2}\left(2^{\frac{L}{B_kt_k}}-1\right)\\
    \text{s.t.}\quad\quad&\sum_{k=1}^KB_k=B,\quad B_k\geq 0,\quad \forall k\in\mathcal{K},\\
    &0\leq t_k\leq T_k,\quad \forall k\in\mathcal{K}\nn.
\end{aligned}
\end{equation}

The problem has a standard structure in the literature of energy-efficient communication (see e.g., \cite{changsheng}). It is convex and can be solved using a standard algorithm such as BCD. In the sequel, we present an alternative solution method yielding an equilibrium property that is useful for low-complexity policy computation faster than conventional methods.



\subsubsection{Properties of Optimal Policies} As before, the objective of Problem {\bf P3} is observed to be a non-increasing function in $t_k$. Therefore, it is optimal to maximize the transmission time of each device, resulting in $t_k^*=T_k,\forall k\in\mathcal{K}$. Next, to derive the optimal bandwidth allocation strategy, a necessary and sufficient condition is given as follows.

\begin{lemma}\label{lemma: energy-bandwidth rate}
(Energy-Bandwidth Rate Equilibrium). \emph{The \emph{energy-bandwidth rate} as defined earlier is mathematically obtained as 
\begin{align}\label{eqn: define nu_k}
    \frac{\partial E_k^{\text{cmm}}}{\partial B_k}  = \frac{t_kN_0}{h_k^2}\left(2^{\frac{L}{B_kt_k}}-\frac{L\ln{2}}{B_kt_k}2^{\frac{L}{B_kt_k}}-1\right)<0 ,\quad  k\in\mathcal{K}.
\end{align}
The optimal bandwidth allocation equalizes the energy-bandwidth rates as
\begin{align}\label{Eq: energy-bandwidth rate}
     \frac{\partial E_1^{\text{cmm}}}{\partial B_1}=\frac{\partial E_2^{\text{cmm}}}{\partial B_2}=\cdots=\frac{\partial E_K^{\text{cmm}}}{\partial B_K}=-\nu^*,
\end{align}
where $\nu^*$ is a constant and $\sum_{k=1}^KB_k = B$.
}
\end{lemma}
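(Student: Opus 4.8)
The plan is to exploit the structure already established just before the statement: since the objective of \textbf{P3} is non-increasing in each $t_k$, the optimum sets $t_k^* = T_k$, which reduces \textbf{P3} to a bandwidth-only program $\min_{\{B_k\}}\sum_{k=1}^K E_k^{\text{cmm}}(B_k)$ over the simplex $\sum_{k=1}^K B_k = B$, $B_k\geq 0$. Because the text already asserts that this problem is convex, the KKT conditions will be both necessary and sufficient, so it suffices to write them out and read off the equilibrium. I would first verify the closed form \eqref{eqn: define nu_k} by differentiating $E_k^{\text{cmm}}=\frac{B_k t_k N_0}{h_k^2}\big(2^{L/(B_k t_k)}-1\big)$ directly with $t_k$ held fixed. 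Applying the product rule to the factor $B_k$ and the chain rule to the exponent $L/(B_k t_k)$ produces a term proportional to $\big(2^{L/(B_k t_k)}-1\big)$ plus a negative term carrying $-\frac{L\ln 2}{B_k t_k}2^{L/(B_k t_k)}$; collecting the common prefactor $\frac{t_k N_0}{h_k^2}$ gives exactly the stated expression. This also confirms the footnote's claim that the partial derivative coincides with the ordinary derivative, since after fixing $t_k=T_k$ the objective is separable and each summand depends only on its own $B_k$.

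Next I would establish negativity. Substituting $x:=L/(B_k t_k)>0$, the bracketed factor becomes $g(x):=2^x(1-x\ln 2)-1$. One checks $g(0)=0$ and $g'(x)=-x(\ln 2)^2\,2^x<0$ for $x>0$, so $g$ is strictly decreasing and hence $g(x)<0$ on $(0,\infty)$; multiplying by the positive prefactor $\frac{t_k N_0}{h_k^2}$ yields $\frac{\partial E_k^{\text{cmm}}}{\partial B_k}<0$, proving the inequality in \eqref{eqn: define nu_k}.

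Finally, for the equilibrium I would form the Lagrangian $\mathcal{L}=\sum_{k=1}^K E_k^{\text{cmm}}(B_k)+\nu\big(\sum_{k=1}^K B_k-B\big)-\sum_{k=1}^K\lambda_k B_k$ with $\lambda_k\geq 0$, whose stationarity condition reads $\frac{\partial E_k^{\text{cmm}}}{\partial B_k}+\nu-\lambda_k=0$. The key step, and the one I expect to be the main obstacle, is ruling out active non-negativity constraints: as $B_k\to 0^+$ the product $B_k\,2^{L/(B_k t_k)}$ diverges, so $E_k^{\text{cmm}}\to\infty$, forcing every optimal $B_k^*>0$. Complementary slackness then gives $\lambda_k=0$ for all $k$, whence $\frac{\partial E_k^{\text{cmm}}}{\partial B_k}=-\nu$ for every $k$; setting $\nu=\nu^*$ delivers \eqref{Eq: energy-bandwidth rate}, with $\nu^*>0$ since each rate is negative. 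Convexity makes this stationary point the unique global optimum, so the condition is simultaneously necessary and sufficient, which completes the argument.
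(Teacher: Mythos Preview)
Your proposal is correct and follows essentially the same route as the paper: fix $t_k=T_k$, form the Lagrangian for the convex bandwidth-allocation problem, and read off the equal-rate condition from stationarity after using complementary slackness to kill the multipliers on $B_k\ge 0$. You are more thorough than the paper's proof, which neither derives \eqref{eqn: define nu_k} nor proves the strict negativity, and which dismisses the boundary case with the one-liner ``the feasible requirement gives $B_k^*>0$''; your $g(x)$ argument and the $B_k\to 0^+$ blow-up fill those gaps cleanly.
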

\begin{proof}
    See Appendix \ref{proof: energy-bandwidth rate}.
\end{proof}

It can be observed from the above lemma that the increase of allocated bandwidth reduces the communication energy consumption of the device.
%
%
Then, the optimal communication RM policy directly follows from the energy-bandwidth rate equilibrium as stated below.
\begin{proposition}\label{lemma: communication bandwidth allocation}
(Optimal Bandwidth Allocation).
\emph{The optimal policy for bandwidth allocation is}
\begin{align}\label{Eq: Optimal Bandwidth Allocation}
    B_k^*=\frac{L\ln{2}}{T_k\left[1+\mathcal{W}_0\!\!\left(\frac{h_k^2\nu^*-T_kN_0}{T_kN_0e}\right)\right]},\quad  k\in\mathcal{K},
\end{align}
\emph{where $T_k$ is the allowed maximum transmission time for device $k$; $\mathcal{W}_0(\cdot)$ is the Lambert $W$ function (principal branch) and $e$ is the Euler's number.}
\end{proposition}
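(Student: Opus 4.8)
The plan is to obtain the closed form by solving the equilibrium condition of Lemma~\ref{lemma: energy-bandwidth rate} explicitly for each $B_k$. First I would combine the two facts already established for Problem~{\bf P3}: the optimal transmission time saturates the latency budget, $t_k^* = T_k$, and at optimality all energy-bandwidth rates equal the common value $-\nu^*$, with $\nu^* > 0$ because the rate in \eqref{eqn: define nu_k} is strictly negative. Substituting $t_k = T_k$ into \eqref{Eq: energy-bandwidth rate} decouples the vector problem into $K$ scalar equations, one per device:
\begin{equation*}
\frac{T_k N_0}{h_k^2}\left(2^{\frac{L}{B_k T_k}} - \frac{L\ln 2}{B_k T_k}\,2^{\frac{L}{B_k T_k}} - 1\right) = -\nu^*,\quad k\in\mathcal{K}.
\end{equation*}
The remaining task is to invert this transcendental relation, expressing $B_k$ in terms of the (still unknown) multiplier $\nu^*$, which is then pinned down by the budget $\sum_k B_k = B$.

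Second, I would reduce the scalar equation to the canonical Lambert-$W$ form. Introducing $u = \frac{L\ln 2}{B_k T_k}$, so that $2^{L/(B_k T_k)} = e^{u}$, rearranges the equation into $e^{u}(1-u) = \frac{T_k N_0 - h_k^2 \nu^*}{T_k N_0}$. Writing $w = u-1$ recasts the left side as $-e\,w\,e^{w}$, so the relation becomes the defining equation $w e^{w} = z$ of the Lambert-$W$ function with $z = \frac{h_k^2\nu^* - T_k N_0}{T_k N_0 e}$. Inverting gives $w = \mathcal{W}(z)$, hence $u = 1 + \mathcal{W}(z)$, and unwinding $u = \frac{L\ln 2}{B_k T_k}$ produces exactly the claimed expression for $B_k^*$.

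The main obstacle I anticipate is the branch selection, which is what makes the statement's specification of the principal branch $\mathcal{W}_0$ substantive rather than cosmetic. I would first verify admissibility of the argument: since $\nu^* > 0$, the bound $z = \frac{h_k^2\nu^* - T_k N_0}{T_k N_0 e} > -\frac{1}{e}$ always holds, so a real root exists. When $z < 0$ (i.e. $h_k^2\nu^* < T_k N_0$) the equation $w e^{w} = z$ admits two real roots, $\mathcal{W}_0(z) \ge -1$ and $\mathcal{W}_{-1}(z) \le -1$, and I must rule out the latter. This is a feasibility check: only $\mathcal{W}_0$ yields $u = 1 + \mathcal{W}_0(z) \ge 0$ and hence a positive bandwidth $B_k^* > 0$, whereas $\mathcal{W}_{-1}$ forces $u \le 0$ and an infeasible (non-positive) bandwidth. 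I would close the argument by invoking the convexity of {\bf P3}, so that the equilibrium point is the unique global optimum and the principal-branch root is the sought policy; the strict monotonicity of $\partial E_k^{\text{cmm}}/\partial B_k$ in $B_k$ additionally shows that $B_k^*$ is strictly decreasing in $\nu^*$, which I would use to argue that the constraint $\sum_k B_k = B$ determines $\nu^*$ uniquely.
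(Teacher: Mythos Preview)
Your proposal is correct and follows exactly the route the paper intends: the paper states that Proposition~\ref{lemma: communication bandwidth allocation} ``directly follows from the energy-bandwidth rate equilibrium'' (Lemma~\ref{lemma: energy-bandwidth rate}) without spelling out the Lambert-$W$ inversion, and your substitution $u=\frac{L\ln 2}{B_kT_k}$, $w=u-1$ is the standard reduction that fills in that gap. Your branch-selection argument via feasibility ($B_k^*>0$ forces $1+\mathcal{W}(z)>0$, ruling out $\mathcal{W}_{-1}$) is a detail the paper omits entirely, so your write-up is in fact more complete than the paper's.
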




Proposition~\ref{lemma: communication bandwidth allocation} suggests that $B_k^*$ is a non-increasing function with respect to $h_k^2$. It means that more bandwidths should be allocated to devices with weaker channels for the benefit of sum communication energy reduction.


Obtaining the optimal bandwidths $\{B^*_k\}$ via~\eqref{Eq: Optimal Bandwidth Allocation} requires computing the optimal energy-bandwidth rate $\nu^*$ by solving the following equation:
\begin{equation}\label{eqn: sum B}
    \sum_{k=1}^K\frac{L\ln{2}}{T_k\left[1+\mathcal{W}_0\!\!\left(\frac{h_k^2\nu^*-T_kN_0}{T_kN_0e}\right)\right]}=B.
\end{equation}
Due to the intractability of the Lambert function $\mathcal{W}_0$ and the unknown range of $\nu^*$, the solution cannot be found efficiently using standard methods such as Newton-Raphson method and bi-section search. An alternative method, \emph{derivative-free optimization} (DFO), which requires no derivative, has too high complexity. To address this issue, a fast algorithm for policy computation is designed below.

\subsubsection{Optimal Policy Computation} Instead of solving $\nu^*$ and $\{B^*_k\}$ in a sequential order, the fast algorithm calculates the optimal values iteratively, comprising the following two phases.
\begin{itemize}
\item {\bf Phase I}: (Bandwidth Optimization). Given the energy-bandwidth rate $\nu$, compute the bandwidth allocation $\{B_k\}$ using~\eqref{Eq: Optimal Bandwidth Allocation}. As $\nu$ is not optimal, the computed $\{B_k\}$ may not satisfy the bandwidth constraint in~\eqref{Eq: bandwidth_constraint}.
Thus it is necessary to normalize each $B_k$ as $\widetilde{B}_k  = \frac{B}{\sum_{k=1}^KB_k}B_k$. 
\item {\bf Phase II}: (Energy-Bandwidth Rate Updating). Calculate the respective energy-bandwidth rates $ \nu_k= - \frac{\partial E_k^{\text{cmm}}}{\partial B_k}$, $\forall k\in\mathcal{K}$, by substituting $\{\widetilde{B}_k\}$ and $\{t_k = T_k\}$ into~\eqref{eqn: define nu_k}. Then, update the current energy-bandwidth rate $\nu$ using $\{\nu_k\}$ as elaborated in the sequel.
\end{itemize}

The two phases are iterated until convergence as indicated by the \emph{energy-bandwidth rate equilibrium} in~\eqref{Eq: energy-bandwidth rate}. 

A key step in the algorithm is the update rule for $\nu$ in Phase II. To derive the rule, a useful property is given as follows.
\begin{lemma}\label{Pre:updating_rule}
\emph{Given $ \nu_k(B_k)= - \frac{\partial E_k^{\text{cmm}}}{\partial B_k}$ in~\eqref{eqn: define nu_k} and $B_k(\nu)$ in~\eqref{Eq: Optimal Bandwidth Allocation}, $\forall k\in\mathcal{K}$, for the $i$-th iteration between Phase I and II , the following holds:
\begin{itemize}
\item If $\nu^{(i-1)}> \nu^*$, then 
	1) $\nu^* <\max\limits_{k}\{\nu_k^{(i)}\} < \nu^{(i-1)}$, and 
	2) $\text{sgn}\left(\sum_{k=1}^KB_k^{(i)}-B\right)= -1$;
\item If $\nu^{(i-1)}< \nu^*$, then 
	1) $\nu^{(i-1)}<\min\limits_{k}\{\nu_k^{(i)}\} < \nu^*$, and 
	2) $\text{sgn}\left(\sum_{k=1}^KB_k^{(i)}-B\right)= 1$;
\end{itemize}
where $\nu_k^{(i)}=\nu_k(\widetilde{B}_k^{(i)}) $ with $\widetilde{B}_k^{(i)}  = \frac{B}{\sum_{k=1}^KB_k^{(i)}}B_k^{(i)} $ given $B_k^{(i)}  = B_k(\nu^{(i-1)}),\forall k\in\mathcal{K}$.}
\end{lemma}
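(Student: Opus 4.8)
The plan is to reduce the entire lemma to one structural fact: for each $k$ the map $B_k \mapsto \nu_k(B_k)$ in~\eqref{eqn: define nu_k} is strictly decreasing on $(0,\infty)$, and the closed form $\nu \mapsto B_k(\nu)$ in~\eqref{Eq: Optimal Bandwidth Allocation} is exactly its inverse. To establish the monotonicity I would fix $t_k = T_k$ and write $E_k^{\text{cmm}} = \frac{T_k N_0}{h_k^2}\, g(B_k)$ with $g(u) = u(2^{L/(u T_k)} - 1)$; a short computation shows $g''(u) > 0$, so $g$ is strictly convex and its derivative $g'$ is strictly increasing. Since $\nu_k = -\frac{T_k N_0}{h_k^2} g'(B_k)$, the rate $\nu_k(\cdot)$ is strictly decreasing, and examining the limits $B_k \to 0^+$ (where $\nu_k \to +\infty$) and $B_k \to \infty$ (where $\nu_k \to 0^+$) shows it is a bijection from $(0,\infty)$ onto $(0,\infty)$. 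Hence $B_k(\nu)$ is a well-defined, strictly decreasing inverse with $\nu_k(B_k(\nu)) = \nu$, and the optimum satisfies $B_k^* = B_k(\nu^*)$ together with $\sum_k B_k^* = B$ by~\eqref{Eq: energy-bandwidth rate} and~\eqref{Eq: bandwidth_constraint}.

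Given this, the two sign statements are immediate. If $\nu^{(i-1)} > \nu^*$, strict monotonicity of $B_k(\cdot)$ gives $B_k^{(i)} = B_k(\nu^{(i-1)}) < B_k(\nu^*) = B_k^*$ for every $k$; summing yields $\sum_k B_k^{(i)} < \sum_k B_k^* = B$, i.e. $\text{sgn}(\sum_k B_k^{(i)} - B) = -1$. The case $\nu^{(i-1)} < \nu^*$ is symmetric and gives $\sum_k B_k^{(i)} > B$. This settles the second assertion in each bullet.

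For the sandwich on $\max_k \{\nu_k^{(i)}\}$ I would use the normalization direction. When $\nu^{(i-1)} > \nu^*$ the previous step gives $\sum_k B_k^{(i)} < B$, so the normalization factor $\lambda^{(i)} = B / \sum_k B_k^{(i)}$ exceeds $1$ and $\widetilde{B}_k^{(i)} = \lambda^{(i)} B_k^{(i)} > B_k^{(i)}$ for all $k$; monotonicity then forces $\nu_k^{(i)} = \nu_k(\widetilde{B}_k^{(i)}) < \nu_k(B_k^{(i)}) = \nu^{(i-1)}$, hence $\max_k \{\nu_k^{(i)}\} < \nu^{(i-1)}$. For the lower bound, suppose toward a contradiction that $\nu_k^{(i)} \le \nu^*$ for all $k$; applying the decreasing inverse gives $\widetilde{B}_k^{(i)} = B_k(\nu_k^{(i)}) \ge B_k(\nu^*) = B_k^*$, and comparing with the normalization identity $\sum_k \widetilde{B}_k^{(i)} = B = \sum_k B_k^*$ forces equality throughout, i.e. $\nu_k^{(i)} = \nu^*$ for every $k$. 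Thus either $\max_k \{\nu_k^{(i)}\} > \nu^*$ as claimed, or the normalized iterate already coincides with the optimum. The case $\nu^{(i-1)} < \nu^*$ is handled identically with $\lambda^{(i)} < 1$ and all inequalities reversed, sandwiching $\min_k \{\nu_k^{(i)}\}$ strictly between $\nu^{(i-1)}$ and $\nu^*$.

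The step I expect to require the most care is excluding the degenerate coincidence in which $\widetilde{B}_k^{(i)} = B_k^*$ for every $k$ while $\nu^{(i-1)} \ne \nu^*$, since this is exactly what would downgrade the strict lower bound $\nu^* < \max_k\{\nu_k^{(i)}\}$ to an equality. This coincidence requires $B_k(\nu^{(i-1)}) = B_k^*/\lambda^{(i)}$ with a common factor $\lambda^{(i)} \ne 1$, i.e. the bandwidth profile $(B_k(\nu^{(i-1)}))_k$ must be exactly proportional to the optimal profile $(B_k(\nu^*))_k$. Because the component maps $B_k(\cdot)$ carry device-specific parameters $h_k$ and $T_k$ and therefore distinct curvatures, no single scaling can relate the two profiles unless $\nu^{(i-1)} = \nu^*$; I would make this precise by comparing the logarithmic derivatives $\frac{d}{d\nu}\ln B_k(\nu)$ across $k$. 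With that ruled out, the degenerate case collapses to $\nu^{(i-1)} = \nu^*$ (equivalently, the algorithm has already converged), so the strict bounds in the lemma hold whenever $\nu^{(i-1)} \ne \nu^*$, completing the argument.
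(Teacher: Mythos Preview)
Your approach is essentially the same as the paper's: establish that $\nu_k(\cdot)$ and $B_k(\cdot)$ are mutually inverse strictly decreasing maps, deduce the sign of $\sum_k B_k^{(i)}-B$ by monotonicity, get the upper bound on $\max_k\{\nu_k^{(i)}\}$ from the normalization factor, and get the lower bound by a contradiction on $\sum_k \widetilde B_k^{(i)} = B$. The paper states the three monotonicity/positivity properties without derivation and then runs exactly this argument; your explicit verification via strict convexity of $g(u)=u(2^{L/(uT_k)}-1)$ is a clean way to justify them.

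Where you go beyond the paper is the degenerate case $\widetilde B_k^{(i)}=B_k^\ast$ for all $k$, which the paper simply ignores (it passes from ``$\max_k\{\nu_k^{(i)}\}<\nu^\ast$ is impossible'' directly to ``$\max_k\{\nu_k^{(i)}\}>\nu^\ast$'' without excluding equality). Your instinct to flag this is good, but your proposed resolution---that device-specific $(h_k,T_k)$ force distinct logarithmic derivatives of $B_k(\nu)$, so no common scaling can relate $(B_k(\nu^{(i-1)}))_k$ to $(B_k(\nu^\ast))_k$---does not hold in general: if all devices share the same $h_k$ and $T_k$, the profiles are identical and \emph{any} scalar relates them, so the coincidence $\widetilde B_k^{(i)}=B_k^\ast$ can occur with $\nu^{(i-1)}\ne\nu^\ast$. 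In that event $\nu_k^{(i)}=\nu^\ast$ for all $k$ and the iteration has landed on the optimum after normalization, so the algorithm is fine even though the strict inequality $\nu^\ast<\max_k\{\nu_k^{(i)}\}$ fails. The honest statement is therefore $\nu^\ast\le\max_k\{\nu_k^{(i)}\}<\nu^{(i-1)}$, with equality only when the normalized iterate already equals the optimum; this is enough for convergence and is all either proof actually establishes.
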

\begin{proof}
See Appendix \ref{proof: algorithm 1}.
\end{proof}

Essentially, by induction, the above Lemma~\ref{Pre:updating_rule} simply implies that if the initial point $\nu^{(0)}$ is greater than $\nu^{*}$, and the update rule adopted for $\nu$ is $\nu^{(i)} = \max\limits_{k}\{\nu_k^{(i)}\}$, the convergence of $\nu$ to the global optimal $\nu^*$ is guaranteed. This is also true for the case of $\nu^{(0)} < \nu^*$ by involving the update rule as $\nu^{(i)} = \min\limits_{k}\{\nu_k^{(i)}\}$. It is further noted that the condition of $\nu^{(0)} > \nu^*$ or $\nu^{(0)} < \nu^*$ can be determined by calculating $\text{sgn}\left(\sum_{k=1}^KB_k^{(1)}-B\right)$, where $B_k^{(1)}  = B_k(\nu^{(0)})$ follows from~\eqref{Eq: Optimal Bandwidth Allocation}. In summary, we have the following update rule for $\nu$ in Phase II:
\begin{align}
\text{(Update rule)} \quad \nu^{(i)} = \left\{ 
 \begin{array}{rcl}
 \max\limits_{k}\{\nu_k^{(i)}\}, &  & \text{sgn}\left(\sum_{k=1}^KB_k^{(1)}-B\right)= -1;\\
 \min\limits_{k}\{\nu_k^{(i)}\}, &  & \text{otherwise},
\end{array}
\right. 
\end{align}
where $\{\nu_k^{(i)}\}$ and $\{B_k^{(1)}\}$ are specified in Lemma~\ref{Pre:updating_rule}. Based on the above rule, the algorithm for optimal bandwidth allocation is summarized in Algorithm~\ref{Algorithm:solve_nu}.

\begin{algorithm} [h]
    \caption{Optimal Bandwidth Allocation}
    \label{Algorithm:solve_nu}
    \textbf{Input:} initial value of $\nu$.\\
    \textbf{Output:} optimal $\nu^*$ and $\{B^*_k\}$.\\
    \textbf{Calculate:} indicator $y=\text{sgn}\left(\sum_{k=1}^KB_k(\nu)-B\right)$.\\
    \textbf{Repeat:}
    \begin{itemize}
    \item Calculate $\{B_k\}$  by substituting $\nu$ into Proposition~\ref{lemma: communication bandwidth allocation};
    \item Normalize $\widetilde{B}_k=\frac{B}{\sum_{k=1}^KB_k}B_k$, $k\in\mathcal{K}$;
    \item Calculate $ \{\nu_k\}$ by substituting  $\{\widetilde{B}_k\}$ into~\eqref{eqn: define nu_k}, $k\in\mathcal{K}$;
    \item Update $\nu=\frac{1-y}{2}\max\limits_{k}\{\nu_k\}+\frac{1+y}{2}\min\limits_{k}\{\nu_k\}$;
    \end{itemize}
    \textbf{Until} $\text{var}\left[\{\nu_k\}\right]<\varepsilon$ (to equalize the energy-bandwidth rates).
\end{algorithm}

\begin{remark}
(Low-Complexity and Optimality). \emph{
The complexity of Algorithm~\ref{Algorithm:solve_nu} is $O(\log\frac{1}{\varepsilon})$ with $\varepsilon$ denoting the target accuracy. For comparison, the computation of the DFO method for solving $\nu^*$ and $\{B_k^*\}$ in a sequential order is $O(\frac{1}{\varepsilon^2})$, and that of the BCD algorithm for directly solving Problem {\bf P3} is $O(\frac{K}{\varepsilon}\log\frac{1}{\varepsilon})$, which are much higher than that of Algorithm \ref{Algorithm:solve_nu}. Furthermore, the optimality is guaranteed by Algorithm \ref{Algorithm:solve_nu} as indicated by Lemma~\ref{Pre:updating_rule}.}
\end{remark}
\subsection{Joint $\text{C}^2$ Resource Management}
\subsubsection{Problem Formulation}
Building on the preceding results, an energy-efficient $\text{C}^2$RM framework is designed by joint control of workload partitioning, $\text{C}^2$ time division, and bandwidth allocation to minimize sum energy. The optimization problem is formulated as
\begin{equation}\label{P:joint_optimization}{\bf (P4)}\quad
\begin{aligned}
    \min_{\{W_k,{t}_k',B_k,t_k\}}\quad&\sum_{k=1}^K\left[\frac{C_kW_k^3+G_k{W'}_k^3}{{t'}_k^2}+\frac{B_kt_kN_0}{h_k^2}\left(2^{\frac{L}{B_kt_k}}-1\right)\right]\\
    \text{s.t.}\quad\quad&W_k+{W}_k'=W,\quad W_k\geq 0,\quad {W}_k'\geq 0, \quad \forall k\in\mathcal{K},\\
    &\sum_{k=1}^KB_k=B,\quad B_k\geq 0, \quad \forall k\in\mathcal{K},\\
    &{t}_k'+t_k\leq T,\quad {t}_k'\geq 0,\quad t_k\geq 0, \quad \forall k\in\mathcal{K}\nn.
\end{aligned}
\end{equation}

\subsubsection{Properties of Optimal Policy} 
It is shown in the sequel that a set of equilibriums exist among devices in terms of energy rates with respect to different types of control variables when $\text{C}^2$RM is optimally energy-efficient. The insights facilitate designing low-complexity policy for solving Problem {\bf P4}. First, by the same argument as in Subsections A and B, the latency constraint in Problem {\bf P4} should be active for energy minimization: ${t'}_k+t_k=T,\forall k\in\mathcal{K}$. The optimal $\text{C}^2$ time division of $T$ has the following property.
\begin{lemma}\label{lemma: energy-time rate}
    (Energy-Time Rate Equilibrium). \emph{The \emph{energy-time rates} as defined earlier can be mathematically obtained as
    \begin{equation}\label{def: xi'}
        \xi_k':=\frac{\partial E_k^{\text{cmp}}}{\partial {t}_k'}=-\frac{2(C_kW_k^3+G_k{W'}_k^3)}{{t'}_k^3},\quad k\in\mathcal{K},
    \end{equation}
    and
    \begin{equation}\label{def: xi}
        \xi_k:=\frac{\partial E_k^{\text{cmm}}}{\partial t_k}=\frac{B_kN_0}{h_k^2}\left(2^{\frac{L}{B_kt_k}}-\frac{L\ln 2}{B_kt_k}2^{\frac{L}{B_kt_k}}-1\right),\quad k\in\mathcal{K}.
    \end{equation}
    The optimal $\text{C}^2$ time division for each device requires the energy-time rate equilibrium: 
    \begin{equation}
        \frac{\partial E_k^{\text{cmp}}}{\partial {t}_k'}=\frac{\partial E_k^{\text{cmm}}}{\partial t_k},\quad\forall k\in\mathcal{K},
    \end{equation}
    with ${t}_k'+t_k= T$.
}
\end{lemma}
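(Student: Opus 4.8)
The plan is to exploit the separable structure of the objective in Problem \textbf{P4}. With the workloads $\{W_k\}$ and bandwidths $\{B_k\}$ held at their optimal values, the objective is a sum of per-device terms $E_k^{\text{cmp}}(t_k')+E_k^{\text{cmm}}(t_k)$, and the time variables couple only through the individual constraints $t_k'+t_k\leq T$. Hence the time-division part of the problem decouples completely across devices: at the joint optimum, fixing every variable except the pair $(t_k',t_k)$ of a single device $k$, that pair must minimize the corresponding per-device energy, since all other summands are constant. It therefore suffices to analyze one device in isolation as a one-dimensional problem.

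First I would invoke the fact (already argued just before the lemma) that the latency constraint is active, so $t_k'+t_k=T$; this holds because $E_k^{\text{cmp}}$ is strictly decreasing in $t_k'$ and $E_k^{\text{cmm}}$ is strictly decreasing in $t_k$, so any slack can be removed to lower energy. Substituting $t_k=T-t_k'$ reduces the per-device energy to the single-variable function
\begin{equation}
\phi_k(t_k'):=\frac{C_kW_k^3+G_k{W'}_k^3}{{t'}_k^2}+\frac{B_k(T-t_k')N_0}{h_k^2}\left(2^{\frac{L}{B_k(T-t_k')}}-1\right),
\end{equation}
to be minimized over $t_k'\in(0,T)$. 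Differentiating and applying the chain rule with $dt_k/dt_k'=-1$ gives $\phi_k'(t_k')=\xi_k'-\xi_k$, where $\xi_k'$ and $\xi_k$ are precisely the energy-time rates in~\eqref{def: xi'} and~\eqref{def: xi}. Setting $\phi_k'(t_k')=0$ then yields the claimed equilibrium $\xi_k'=\xi_k$, establishing that the condition is \emph{necessary}.

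The hard part will be \emph{sufficiency}: showing this stationary point is the unique global minimizer, which I would obtain from strict convexity of $\phi_k$ on $(0,T)$. The computation term $A_k/{t'}_k^2$ with $A_k=C_kW_k^3+G_k{W'}_k^3>0$ has second derivative $6A_k/{t'}_k^4>0$ and is thus strictly convex. For the communication term I would reuse the convexity of $E_k^{\text{cmm}}$ already exploited for Problem \textbf{P3}; restricting to $t_k$ with $B_k$ fixed preserves it, and since $t_k=T-t_k'$ is affine, the term remains convex in $t_k'$. A strictly convex plus a convex function is strictly convex, so $\phi_k$ has a unique minimizer.

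Finally, I would verify interiority so that the stationarity condition genuinely characterizes the optimum. As $t_k'\to 0^+$ we have $\xi_k'\to-\infty$ with $\xi_k$ finite, so $\phi_k'\to-\infty$; as $t_k'\to T^-$ (equivalently $t_k\to 0^+$) the term $-\frac{L\ln 2}{B_kt_k}2^{L/(B_kt_k)}$ drives $\xi_k\to-\infty$ while $\xi_k'$ stays finite, so $\phi_k'\to+\infty$. Thus $\phi_k$ decreases then increases, its unique minimizer lies in the interior, and the equilibrium $\xi_k'=\xi_k$ with $t_k'+t_k=T$ is both necessary and sufficient for optimal $\text{C}^2$ time division, completing the argument.
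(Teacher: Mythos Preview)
Your proof is correct and follows essentially the same route the paper intends: the paper omits the proof and refers to the argument for Lemma~\ref{lemma: energy-load rate}, which derives the equilibrium as the first-order (KKT) optimality condition of the relevant convex subproblem. Your version simply eliminates the constraint $t_k'+t_k=T$ directly to obtain a one-dimensional problem instead of carrying a Lagrange multiplier, and it is in fact more complete than the paper's sketch since you explicitly verify strict convexity and interiority to secure sufficiency.
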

The proof is similar to that for Lemma~\ref{lemma: energy-load rate} and thus omitted for brevity. 
Next, one can observe that Problem {\bf P4} integrates {\bf P2} and {\bf P3} by summing their objectives and combining their constraints. Therefore, {\bf P4} is also convex and furthermore the results in Lemmas \ref{lemma: energy-load rate} and \ref{lemma: energy-bandwidth rate} hold for the current case. Then, combining Lemmas \ref{lemma: energy-load rate}, \ref{lemma: energy-bandwidth rate} and \ref{lemma: energy-time rate} yields the following main result.

\begin{framed}
\vspace{-5mm}
\begin{theorem}\label{theorem: equilibrium condition}
(Equilibrium Based $\text{C}^2$RM). \emph{The optimal $\text{C}^2$RM policy achieves the following equilibriums:
\begin{itemize}
    \item[1)] For each device, the optimal $\text{C}^2$ time division equalizes the energy-computation-time and energy-communication-time rates:
    \begin{align}
        \frac{\partial E_k^{\text{cmp}}}{\partial {t}_k'}=\frac{\partial E_k^{\text{cmm}}}{\partial t_k},\quad \forall k\in\mathcal{K}.
    \end{align}
    \item[2)] The optimal bandwidth allocation equalizes the energy-bandwidth rates:
    \begin{align}
        \frac{\partial E_1^{\text{cmm}}}{\partial B_1}=\frac{\partial E_2^{\text{cmm}}}{\partial B_2}=\cdots=\frac{\partial E_K^{\text{cmm}}}{\partial B_K}.
    \end{align}
    \item[3)] The optimal workload allocation at each device equalizes the energy-workload rates:
    \begin{align}
        \frac{\partial E_k^{\text{CPU}}}{\partial W_k}=\frac{\partial E_k^{\text{GPU}}}{\partial W_k'}, \quad \forall k\in\mathcal{K}.
    \end{align}
    \item[4) ] The optimal speed scaling at each device equalizes the energy-speed rates:
    \begin{align}
        \frac{\partial E_k^{\text{CPU}}}{\partial f_k}=\frac{\partial E_k^{\text{GPU}}}{\partial f_k'}, \quad \forall k\in\mathcal{K}.
    \end{align}
\end{itemize}}
\end{theorem}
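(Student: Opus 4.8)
The plan is to leverage the convexity of Problem \textbf{P4}, which is already noted in the text, so that the Karush--Kuhn--Tucker (KKT) stationarity conditions are both necessary and sufficient for global optimality; the four equilibriums then fall out directly from the Lagrangian stationarity equations. First I would attach a multiplier $\lambda_k$ to each per-device workload constraint $W_k+W_k'=W$, a single multiplier $\mu$ to the coupling bandwidth constraint $\sum_{k}B_k=B$, and a multiplier $\gamma_k$ to each per-device latency constraint $t_k'+t_k=T$ (active at the optimum, as argued just before the lemma). The structural observation that makes everything decouple is that the objective of \textbf{P4} is additively separable into a computation part depending only on $\{W_k,W_k',t_k'\}$ and a communication part depending only on $\{B_k,t_k\}$, so each stationarity equation carries exactly one energy term plus the relevant multiplier.

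Next I would read the first three conditions off the stationarity equations. Differentiating with respect to the paired variables $W_k$ and $W_k'$ yields $\partial E_k^{\text{CPU}}/\partial W_k=\partial E_k^{\text{GPU}}/\partial W_k'=-\lambda_k$, which is condition 3) and coincides with Lemma~\ref{lemma: energy-load rate}. Differentiating with respect to $B_k$ yields $\partial E_k^{\text{cmm}}/\partial B_k=-\mu$ for every $k$; since $\mu$ is shared across devices, this is the cross-device equalization of condition 2), matching Lemma~\ref{lemma: energy-bandwidth rate}. Differentiating with respect to the paired time variables $t_k'$ and $t_k$ yields $\partial E_k^{\text{cmp}}/\partial t_k'=\partial E_k^{\text{cmm}}/\partial t_k=-\gamma_k$, which is condition 1) and Lemma~\ref{lemma: energy-time rate}. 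Because \textbf{P4} merely sums the objectives and unions the constraints of \textbf{P2} and \textbf{P3}, these three lemmas transfer unchanged to the joint problem, the only genuinely new coupling being the shared latency budget $T$ that condition 1) resolves.

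Condition 4) needs a short separate argument, because the speed variables $f_k,f_k'$ were eliminated in passing from \textbf{P1} to \textbf{P2} via Lemma~\ref{lemma: computation condition}. Starting from $E_k^{\text{CPU}}=C_kW_kf_k^2$ and $E_k^{\text{GPU}}=G_kW_k'f_k'^2$, the energy-speed rates are $\partial E_k^{\text{CPU}}/\partial f_k=2C_kW_kf_k$ and $\partial E_k^{\text{GPU}}/\partial f_k'=2G_kW_k'f_k'$. Substituting the workload-proportional speeds $f_k=W_k/t_k'$ and $f_k'=W_k'/t_k'$ from Lemma~\ref{lemma: computation condition} turns these into $2C_kW_k^2/t_k'$ and $2G_kW_k'^2/t_k'$, whose equality is exactly the workload equilibrium $C_kW_k^2=G_kW_k'^2$ already guaranteed by condition 3); hence the two energy-speed rates coincide and condition 4) holds.

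I expect the main obstacle to be condition 4) rather than the first three, precisely because it requires pinning down which quantity is held fixed in forming $\partial E_k^{\text{CPU}}/\partial f_k$ and verifying that this re-expression is consistent with the elimination performed in Lemma~\ref{lemma: computation condition}; the footnote on partial-versus-total derivatives is what licenses treating these rates uniformly. The remaining work is bookkeeping: confirming that the non-negativity constraints $W_k,W_k',B_k,t_k',t_k\ge 0$ are inactive at the interior optimum (or handled through complementary slackness) so that the plain stationarity equations above are valid, and invoking convexity once more to conclude that any policy satisfying these equilibriums is in fact the unique global minimizer, establishing the ``only if'' direction as well.
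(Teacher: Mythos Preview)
Your proposal is correct and matches the paper's approach: the paper simply observes that \textbf{P4} is convex and ``integrates \textbf{P2} and \textbf{P3} by summing their objectives and combining their constraints,'' so that Lemmas~\ref{lemma: energy-load rate}, \ref{lemma: energy-bandwidth rate}, and~\ref{lemma: energy-time rate} (each a KKT stationarity statement) carry over verbatim to yield conditions 1)--3). Your treatment of condition~4) is in fact more explicit than the paper's, which leaves the energy-speed equilibrium unproved in the text; your derivation via Lemma~\ref{lemma: computation condition} plus the workload equilibrium $C_kW_k^2=G_kW_k'^2$ is the right way to fill that gap.
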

\vspace{-5mm}
\end{framed}

\begin{remark}\label{Remark: C2-Tradeoff}
(Equalizing $\text{C}^2$ Heterogeneity). \emph{In the process of synchronized updates, Theorem 1 suggests that energy-efficient $\text{C}^2$RM should equalize the heterogeneity in communication channels and computation efficiencies using the multi-dimensional control variables. First, the heterogeneity in multiuser channel states and CPU/GPU computation efficiencies at each device is equalized by bandwidth allocation and workload partitioning (with speed scaling) as reflected in the second and third (with fourth) equilibriums. Second, the heterogeneity in $\text{C}^2$ speeds is equalized by adjusting the $\text{C}^2$ time division according to the first equilibrium. It is worth mentioning that $\text{C}^2$ time division represents a \emph{$\text{C}^2$ tradeoff} that more communication resources can compensate for the lack of computation resources and vice versa.
}
\end{remark}

\subsubsection{Optimal Policy Computation} Though the optimal solution for Problem {\bf P4} can be obtained by finding the equilibriums in Theorem~\ref{theorem: equilibrium condition} numerically using the standard method of BCD, it has high complexity. To tackle this challenge, we design a more efficient algorithm as follows. First, Problem ${\bf P4}$ can be decomposed into one master problem and two sub-problems as follows:
\begin{framed}
\vspace{-1mm}
\begin{itemize}
    \item (Master Problem): The master problem is the optimization of $\text{C}^2$ time division. Denote ${E}^{\text{cmp}}(\{\widetilde T_k\})$ and ${E}^{\text{cmm}}(\{\widetilde T_k\})$ as the computation and communication energy, which are functions of the allowed maximum communication time $\{\widetilde T_k\}$ given optimized resource management strategies output by sub-problem 1 and sub-problem 2. Then, the master problem is cast as
    \begin{equation*}\label{master_problem}{\bf (MP)}\quad
    \begin{aligned}
        \min_{\{\widetilde T_k\}}\quad& {E}^{\text{cmp}}(\{\widetilde T_k\})+ {E}^{\text{cmm}}(\{\widetilde T_k\})\\
        \text{s.t.}\quad&0< \widetilde T_k < T,\quad \forall k\in\mathcal{K}.
    \end{aligned} 
    \end{equation*}
    \item (Two Sub-problems):
    \begin{itemize}
    \item ({Computation RM}): Problem ${\bf P1}$ with $\{{T}_k'=T-\widetilde T_k\}$;
    \item ({Communication RM}): Problem ${\bf P3}$ with $\{T_k = \widetilde T_k\}$;
    \end{itemize}
\end{itemize}
\vspace{-3mm}
\end{framed}

As Problem {\bf P4} is convex, the optimal solutions can be obtained via iterations between the master problem and two sub-problems. Each iteration comprises two steps: 1) given $\{\widetilde T_k\}$, compute the optimal RM strategies by solving two sub-problems; 2) given allocated bandwidths and partitioned workloads, calculate the sub-gradient of the master problem and apply it to updating $\{\widetilde T_k\}$ via the gradient descent method. The two steps are iterated until convergence. 


Note that the solution of one sub-problem, namely Problem {\bf P1}, can be calculated directly via~\eqref{Eq: Optimal_Workload_Allocation} and~\eqref{proposition: frequency scaling} while numerically calculation is required for solving the other sub-problem, namely Problem {\bf P3}. Given the efficient Algorithm~\ref{Algorithm:solve_nu} developed for computing $\{B^*_k\}$, in the sequel, we aim at further improving the efficiency by proposing a novel initialization method of $\nu$. To this end, another useful property is introduced in the following lemma.
\begin{lemma}
\emph{The optimal $\nu^*$ under the arbitrary given $\text{C}^2$ time division, namely $T'_k + T_k = T$ with $t'_k \leq T'_k$ and $t_k \leq T_k, \forall k \in \mathcal{K}$, can be calculated as 
\begin{equation}\label{nu}
    \nu^* =-\frac{1}{B}\sum_{k=1}^{K}T_k\frac{\partial E_k^{\text{cmm}}}{\partial t_k}\bigg|_{T_k} = -\frac{1}{B}\sum_{k=1}^{K}T_k\xi_k(T_k),
\end{equation}
where $\xi_k(T_k) = \frac{\partial E_k^{\text{cmm}}}{\partial t_k}\big|_{T_k}$ is the energy-communication time rate under current time division.}
\end{lemma}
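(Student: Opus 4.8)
The plan is to exploit a scaling identity peculiar to the communication energy that couples the bandwidth-derivative and the time-derivative of $E_k^{\text{cmm}}$, and then to invoke the two optimality conditions already established for the communication subproblem, namely maximal transmission time and the energy-bandwidth rate equilibrium.

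First I would note that the per-device energy $E_k^{\text{cmm}}$ in \eqref{Eq:E_upload} depends on $B_k$ and $t_k$ only through their product. Writing $u_k = B_k t_k$ so that $E_k^{\text{cmm}} = \frac{u_k N_0}{h_k^2}\left(2^{L/u_k}-1\right)$, the chain rule gives $\partial E_k^{\text{cmm}}/\partial B_k = t_k\,\partial E_k^{\text{cmm}}/\partial u_k$ and $\partial E_k^{\text{cmm}}/\partial t_k = B_k\,\partial E_k^{\text{cmm}}/\partial u_k$. Multiplying the first relation by $B_k$ and the second by $t_k$ produces the key identity
\begin{equation}
B_k\frac{\partial E_k^{\text{cmm}}}{\partial B_k}=t_k\frac{\partial E_k^{\text{cmm}}}{\partial t_k}=t_k\xi_k,\quad\forall k\in\mathcal{K}.
\end{equation}
The same identity can be read off directly by comparing \eqref{eqn: define nu_k} and \eqref{def: xi}, whose bracketed factors are identical and differ only in the prefactors $t_k$ versus $B_k$.

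Next I would substitute the two optimality conditions. At the optimum of the communication subproblem the transmission time is maximized, $t_k=T_k$, exactly as argued for Problem {\bf P3}; and by the energy-bandwidth rate equilibrium of Lemma \ref{lemma: energy-bandwidth rate}, $\partial E_k^{\text{cmm}}/\partial B_k=-\nu^*$ for every $k$. Inserting both into the identity yields $-\nu^* B_k^* = T_k\xi_k(T_k)$ for each $k$, where $B_k^*$ is the optimal bandwidth of Proposition \ref{lemma: communication bandwidth allocation}.

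Finally I would sum over $k\in\mathcal{K}$ and apply the total-bandwidth constraint $\sum_k B_k^* = B$ from \eqref{Eq: bandwidth_constraint}, giving $-\nu^* B = \sum_{k=1}^K T_k\xi_k(T_k)$, which rearranges to the claimed \eqref{nu}. The only nontrivial step is recognizing the product-form dependence of $E_k^{\text{cmm}}$ on $(B_k,t_k)$ that generates the scaling identity; once that is in hand, the remainder is a one-line substitution and summation, and no new existence or convexity argument is required since $\nu^*$ and $\{B_k^*\}$ are already guaranteed optimal by Lemma \ref{lemma: energy-bandwidth rate} and Proposition \ref{lemma: communication bandwidth allocation}.
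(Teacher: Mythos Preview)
Your proposal is correct and follows essentially the same route as the paper: the paper's proof is the one-line observation that $T_k\frac{\partial E_k^{\text{cmm}}}{\partial t_k}\big|_{T_k}=B_k^*\frac{\partial E_k^{\text{cmm}}}{\partial B_k}\big|_{B_k^*}=-\nu^* B_k^*$, then summing over $k$ with $\sum_k B_k^*=B$. Your version is slightly more explicit in deriving the scaling identity from the product-form dependence $E_k^{\text{cmm}}=E_k^{\text{cmm}}(B_kt_k)$, but the argument is otherwise identical.
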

The above result can be proved by noting $ T_k\frac{\partial E_k^{\text{cmm}}}{\partial t_k}\big|_{T_k}=B_k^*\frac{\partial E_k^{\text{cmm}}}{\partial B_k}\big|_{B_k^*} = -\nu^* B_k^*$ given the bandwidth constraint, namely $\sum_{k=1}^{K}B_k^* = B$. As the values of $\{\xi_k(T_k)\}$ are not attainable, $\nu^*$ can not be solved directly. Nevertheless, we can get a good initial point for $\nu$ by taking advantage of \eqref{nu}. To be specific, due to the fact that $\xi_k'(T'_k)$ and $\xi_k(T_k)$ converge with iterations and $\xi_k'(T'_k)$ can be calculated by solving {\bf P1} in closed-form, we propose to initialize $\nu$ as
\begin{equation}\label{eqn: initialize nu}
\nu_0:= -\frac{1}{B}\sum_{k=1}^K T_k\xi_k'(T'_k),
\end{equation}
Such initialization provides an increasingly better initial point that is closer to the optimal solution as the
outer iteration proceeds. The algorithm for computing the optimal $\text{C}^2$RM policy is summarized in Algorithm~\ref{Algorithm:joint_RM}. 

\begin{algorithm} [h]
    \caption{Optimal $\text{C}^2$RM}
    \label{Algorithm:joint_RM}
    \textbf{Input:} initial values of $\{\widetilde T_k\}$.\\
    \textbf{Output:} optimal $\{B_k^*\}$, $\{{t'}_k^* = T-\widetilde T^*_k\}$, $\{t_k^* = \widetilde T^*_k\}$.\\
    \textbf{Repeat:}
    \begin{itemize}
    \item Solve the two subproblems with $\{T_k = \widetilde T_k\}$ and $\{T_k' = T-\widetilde T_k\}$ as input:
    \begin{description}
        \item[$\triangleright$ Solve Problem ${\bf (P1)}$:] \hspace{3cm} - calculate $\{W_k\}$ and $\{W^{'}_k\}$ using~\eqref{Eq: Optimal_Workload_Allocation};
         \item[]  \hspace{3cm} - obtain $\{\xi_k'\}$ by substituting $\{t_k'=T_k'\}$ and $\{W_k, W^{'}_k\}$ into~\eqref{def: xi'};
        \item[$\triangleright$ Solve Problem ${\bf (P3)}$:] 
         \hspace{3cm} - calculate $\{B_k\}$  and solve $\nu$ using Algorithm~\ref{Algorithm:solve_nu} by involving
         \item[] \hspace{3.25cm}  the initialization method~\eqref{eqn: initialize nu};
        \item[] \hspace{3cm} - obtain $\{\xi_k\}$ by substituting $\{B_k\}$ and $\{t_k = T_k\}$ into~\eqref{def: xi};
    \end{description}
    \item Update the time division: $\widetilde T_k= \widetilde T_k-\eta(\xi_k-\xi_k'),\quad \forall k \in \mathcal{K}$;
    \end{itemize}
    \textbf{Until} $\|\boldsymbol{\xi}'-\boldsymbol{\xi}\|^2\leq\varepsilon$ (to equalize the energy-time rates), where $\boldsymbol{\xi}'\triangleq [\xi_1',\cdots,\xi_K']^{\top}$ and $\boldsymbol{\xi}\triangleq [\xi_1,\cdots,\xi_K]^{\top}$.
\end{algorithm}

\begin{remark}
(Low-Complexity Algorithm). \emph{The proposed Algorithm 2 is of low-complexity, which has complexity up to $O(K\log^2\frac{1}{\varepsilon})$ as compared to that of the conventional BCD method for solving ${\bf P4}$, which has complexity $O(\frac{K}{\varepsilon}\log\frac{1}{\varepsilon})$.}
\end{remark}

\subsection{Discussion on Energy-Learning Tradeoff}

In the literature, the model convergence of federated learning is characterized by either the averaged gradient norm (see e.g., \cite{signSGD}) or the loss function (see e.g., \cite{federatedlearning}), as a monotone decreasing function of the number of required rounds, $N$, and participating devices $K$. For example, it is reported in \cite{signSGD} that with a properly chosen learning rate (step size), the averaged gradient norm over rounds is shown to be proportional to $1/\sqrt{N}$ and $1/\sqrt{K}$. In existing work, the per-round latency, $T$, is assumed constant and thereby can be omitted in the convergence analysis. By considering $T$ or $N$ as a function of $\text{C}^2$ sum energy, denoted as $E_{\varSigma}$, we can discuss a learning-energy tradeoff as follows.

\begin{definition}
\emph{(\emph{Learning Latency}). The learning latency is defined as $T_{\sf total}=N \times T~(\text{in second})$.
}
\end{definition}


\subsubsection{Fixed number of participating devices, $K$} This implies a fixed distributed dataset and thus fixed $N$. On the other hand, it can be inferred from Lemma~\ref{lemma: energy-time rate} that $T$ is a monotone decreasing function of $E_{\varSigma}$. Thus, the learning latency can be written as $T_{\text{total}}=N\times T(E_{\varSigma})$, which establishes the tradeoff between $T_{\text{total}}$ and $E_{\varSigma}$. Specifically, reducing $E_{\varSigma}$ can result in increased learning latency, and vice versa. Based on Lemma \ref{lemma: energy-time rate}, one can see that the reduction in the learning latency leads to the increment in sum energy at a rate \emph{faster than linear}.
%
\subsubsection{Fixed per-round latency, $T$}  For this case, an energy-learning tradeoff also exists. To be specific, it can be easily shown that the number of devices that can satisfy the per-round latency constraint is a monotone increasing function of $E_{\varSigma}$. Thereby, reducing $E_{\varSigma}$ can result in a reduced number of devices participating in learning. This gives rise to a larger required number of rounds for convergence due to the shrinking of distributed dataset. Thus, the learning latency can be written as $T_{\text{total}}=N(E_{\varSigma})\times T$ with $N(E_{\varSigma})$ being a monotone decreasing function. Then the energy-learning tradeoff is one that shortening learning latency needs more energy and vice versa. Particularly, the intermediate parameter, namely the number of participating devices, controls the said tradeoff in this case. This further motivates us to explore $\text{C}^2$ aware device scheduling in the following section.

In general, the functions $T(E_{\varSigma})$ in the first case and $N(E_{\varSigma})$ in the second case have no closed form. The analysis of their properties (e.g., scaling laws) is an interesting topic but outside the scope of this work.
\section{$\text{C}^2$ Aware Scheduling}\label{section:device-scheduling}
When there are many devices providing more than sufficient data, it is desirable from the energy-efficiency perspective to select only a subset of devices for participating in model training. In this section, the scheduler design is presented and the effect of scheduling on model convergence is quantified.

\subsection{Scheduler Design}
The mathematical formulation of the design is similar to {\bf P4} by modifying the objective as $\sum\limits_{k=1}^K\rho_k\left[\frac{C_kW_k^3+G_k{W'}_k^3}{{t'}_k^2}+\frac{B_kt_kN_0}{h_k^2}\left(2^{\frac{L}{B_kt_k}}-1\right)\right]$ under the additional constraints, namely $\sum\limits_{k=1}^K\rho_k=M,~\rho_k\in\{0,1\}, ~\forall k\in\mathcal{K}$, where $\{\rho_k\}$ are the indicator variables for selecting devices.
This mixed integer non-linear problem is NP-hard and classical optimal algorithms (e.g., branch and bound) have too high complexity to be practical when $K$ is large. To tackle this challenge, we propose a novel $\text{C}^2$ aware scheduler design of low-complexity to minimize sum energy. 

We consider the problem of selecting $M$ out of $K$ devices for FEEL. The data distribution over devices is assumed i.i.d. as commonly make in the literature (see e.g., \cite{tran2019energy}). Then with $M$ fixed, device scheduling has no effect on the number of rounds $N$ required for learning (whose dependence on $M$ is studied in Section B). This allows scheduling decisions to be based on the $\text{C}^2$ states of devices, giving the name of $\text{C}^2$ aware scheduling. 



The key element of the scheduler is a scheduling metric designed as follows.
Based on the preceding analysis, it is desirable to select devices with good channels or/and good computation efficiencies for energy reduction. To identify such devices, we propose to first perform equal bandwidth allocation over all devices (i.e., $B_k=\bar{B}=B/K,~\forall k\in\mathcal{K}$) and then evaluate the resulting energy consumption of each device. Note that equal bandwidth allocation enables the high energy consumption of a device to indicate: 1) a poor channel, or 2) a low computation efficiency, or 3) both. 
Therefore, the devices' energy consumption with equal bandwidth allocation is a suitable scheduling metric given as:
\begin{equation}\label{Eq: Energy_Device_scheduling}
E_k =  \frac{\bar{B} t_k^*N_0}{h_k^2}\left(2^{\frac{L}{\bar{B}t_k^*}}-1\right) + \frac{a_kW^3}{(T-t_k^*)^2},\quad\forall k\in\mathcal{K},
\end{equation}
where $a_k\triangleq\frac{C_kG_k}{(\sqrt{C_k}+\sqrt{G_k})^2},\forall  k\in\mathcal{K}$. To compute its value,  the optimal transmission time $t^*_k$ is needed given $B_k=\bar{B}\triangleq B/K$. Based on the energy-time rate equilibrium developed in Lemma~\ref{lemma: energy-time rate}, $t_k^*$ solves the equation below:
\begin{equation}\label{eqn: one-shot}
   f(t_k^*) = \frac{\bar{B}N_0}{h_k^2}\left(2^{\frac{L}{\bar{B}t_k^*}}-\frac{L\ln{2}}{\bar{B}t_k^*}2^{\frac{L}{\bar{B}t_k^*}}-1\right) + \frac{2a_kW^3}{(T-t_k^*)^3} = 0.
\end{equation}
A closed-form expression for $t_k^*$ is hard to derive but can be numerically computed by a bi-section search since $f(t_k^*)$ is a monotonically increasing function. Given the values of $\{E_k\}$, the $\text{C}^2$ aware scheduler selects $M$ devices with the smallest values. The scheduling scheme is summarized in Algorithm \ref{algorithm: one-shot scheduling}.

\begin{algorithm} [h]
    \caption{$\text{C}^2$ Aware Scheduling}
    \label{algorithm: one-shot scheduling}
    \textbf{Initialization}: $\forall k\in\mathcal{K},~\rho_k=0,~\bar{B}=B/K$.\\
    \textbf{Output:} Subset $\mathcal{M}=\{k\in\mathcal{K}\mid \rho_k=1\}$.
    \begin{itemize}
        \item Solve for the optimal time divisions $\{t_k^*\}$ using \eqref{eqn: one-shot} and a bi-section search;
        \item Calculate the scheduling metrics $\{E_k\}$ by substituting $\{t_k^*\}$ into~\eqref{Eq: Energy_Device_scheduling};
        \item Select $M$ devices with the smallest $E_k$ and set their $\rho_k=1$.
    \end{itemize}
\end{algorithm}

%
\subsection{Effect of Scheduling on Convergence}
In this subsection, we quantify the effect of $\text{C}^2$ aware scheduling (without considering data importance) on the convergence rate of FEEL (in round) due to the reduced size of selected dataset for model updating. For tractable analysis, we follow the standard assumptions on the loss function as made in the literature (see e.g., \cite{sqwang_2018_edge_learning}).
\begin{assumption}\label{assumption: convexity and smoothness}
    \emph{It is assumed that the loss functions has the following two properties:}
    \begin{itemize}
        \item (Convexity and Smoothness). \emph{All functions $\{F_k\}$ are convex and $\beta$-smooth, that is, for all $\mathbf{u},\mathbf{v}$, we have
        \begin{align}
            \langle\nabla F_k(\mathbf{v}),\mathbf{u}-\mathbf{v}\rangle\leq F_k(\mathbf{u})-F_k(\mathbf{v})\leq \langle\nabla F_k(\mathbf{v}),\mathbf{u}-\mathbf{v}\rangle+\frac{\beta}{2}\|\mathbf{u}-\mathbf{v}\|^2
        \end{align}}
    \item (Variance Bound). \emph{Stochastic gradients $\{\nabla F_k(\mathbf{w})\}$ are unbiased and variance bounded by $\sigma^2$, that is, 
    \begin{align}
        \mathbb{E}[\nabla F_k(\mathbf{w})]=\nabla F(\mathbf{w})\qquad \text{and}\qquad \mathbb{E}[\|\nabla F_k(\mathbf{w})-\nabla F(\mathbf{w})\|^2]\leq \sigma^2,
    \end{align}
    where $\nabla F_k(\mathbf{w})$ and $\nabla F(\mathbf{w})$ denote the gradients of a local loss function and the global loss function, respectively, and the expectations are taken over all devices.}
    \end{itemize}
\end{assumption}
Under the above assumptions, the upper bound on the convergence rate of FEEL algorithm with device scheduling is given in the following theorem.
\begin{theorem}\label{theorem: convergence rate}
    (Convergence Rate with Scheduling). \emph{Given the learning rate as $\eta=\frac{1}{\sqrt{N}}$, the convergence rate of the FEEL algorithm with device-scheduling can be upper-bounded  as
    \begin{equation}\label{Learning Convergence Rate}
        \frac{1}{N}\sum_{i=0}^{N-1}\mathbb{E}_{\mathcal{M}_i}\left[F({\mathbf{w}}^{(i)})-F(\mathbf{w}^*)\right]\leq\frac{1}{\sqrt{N}}\left[\|\mathbf{w}^{(0)}-\mathbf{w}^*\|^2+\frac{\sigma^2(K-M)}{(K-1)M}\right].
    \end{equation}
    where $N$ denotes the number of communication rounds.}
\end{theorem}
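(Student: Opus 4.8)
The plan is to treat the scheduled aggregate $\mathbf{g}^{(i)}=\frac{1}{M}\sum_{k\in\mathcal{M}_i}\nabla F_k(\mathbf{w}^{(i)})$ as a stochastic gradient produced by sampling $M$ devices uniformly at random without replacement, and then to run a standard SGD descent analysis for convex, $\beta$-smooth objectives. First I would establish two statistical properties of $\mathbf{g}^{(i)}$ conditioned on $\mathbf{w}^{(i)}$: (i) unbiasedness, $\mathbb{E}_{\mathcal{M}_i}[\mathbf{g}^{(i)}]=\nabla F(\mathbf{w}^{(i)})$, which follows at once from $\mathbb{E}[\nabla F_k]=\nabla F$ in Assumption~\ref{assumption: convexity and smoothness} together with the symmetry of uniform sampling; and (ii) a variance bound carrying the finite-population correction factor. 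For the latter, writing $\mathbf{d}_k=\nabla F_k(\mathbf{w}^{(i)})-\nabla F(\mathbf{w}^{(i)})$ so that $\sum_k\mathbf{d}_k=\mathbf{0}$ and $\frac{1}{K}\sum_k\|\mathbf{d}_k\|^2\leq\sigma^2$, I would expand $\mathbb{E}\|\mathbf{g}^{(i)}-\nabla F(\mathbf{w}^{(i)})\|^2=\frac{1}{M^2}\mathbb{E}\big[\sum_{k\in\mathcal{M}_i}\|\mathbf{d}_k\|^2+\sum_{k\neq l\in\mathcal{M}_i}\langle\mathbf{d}_k,\mathbf{d}_l\rangle\big]$ and substitute the without-replacement inclusion probabilities $\Pr[k\in\mathcal{M}_i]=M/K$ and $\Pr[k,l\in\mathcal{M}_i]=M(M-1)/[K(K-1)]$, using the identity $\sum_{k\neq l}\langle\mathbf{d}_k,\mathbf{d}_l\rangle=-\sum_k\|\mathbf{d}_k\|^2$. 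This collapses to $\mathbb{E}\|\mathbf{g}^{(i)}-\nabla F(\mathbf{w}^{(i)})\|^2\leq\frac{\sigma^2(K-M)}{M(K-1)}$, which is exactly the second term appearing in the claimed bound.

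With these two properties in hand, I would carry out the classical one-step recursion. Expanding $\|\mathbf{w}^{(i+1)}-\mathbf{w}^*\|^2=\|\mathbf{w}^{(i)}-\eta\mathbf{g}^{(i)}-\mathbf{w}^*\|^2$ and taking the conditional expectation gives $\mathbb{E}\|\mathbf{w}^{(i+1)}-\mathbf{w}^*\|^2=\|\mathbf{w}^{(i)}-\mathbf{w}^*\|^2-2\eta\langle\nabla F(\mathbf{w}^{(i)}),\mathbf{w}^{(i)}-\mathbf{w}^*\rangle+\eta^2\mathbb{E}\|\mathbf{g}^{(i)}\|^2$, where unbiasedness has removed the randomness from the cross term. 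Convexity of $F$ lower-bounds the inner product by $F(\mathbf{w}^{(i)})-F(\mathbf{w}^*)$, turning the relation into a recursion in the optimality gap. I would then rearrange to isolate $2\eta\big(F(\mathbf{w}^{(i)})-F(\mathbf{w}^*)\big)$, sum over $i=0,\dots,N-1$, telescope the distance terms down to $\|\mathbf{w}^{(0)}-\mathbf{w}^*\|^2$, divide by $2\eta N$, and finally set $\eta=1/\sqrt{N}$ so that both the initial-distance term and the accumulated noise term acquire the common $1/\sqrt{N}$ prefactor, yielding the stated form.

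The main obstacle is the second-moment term $\mathbb{E}\|\mathbf{g}^{(i)}\|^2$, which decomposes as $\|\nabla F(\mathbf{w}^{(i)})\|^2+\mathbb{E}\|\mathbf{g}^{(i)}-\nabla F(\mathbf{w}^{(i)})\|^2$: only the second summand is the clean variance computed above, whereas the first must be controlled so that it does not survive into the final estimate, which is free of $\beta$ and of any residual gradient-norm term. The intended route is to invoke $\beta$-smoothness through the co-coercivity inequality $\|\nabla F(\mathbf{w}^{(i)})\|^2\leq 2\beta\big(F(\mathbf{w}^{(i)})-F(\mathbf{w}^*)\big)$, valid since $\mathbf{w}^*$ is a global minimizer, so that this contribution can be absorbed back into the optimality-gap term on the left-hand side once the step size is small enough; the surviving coefficient is then lower-bounded by a positive multiple of $\eta$, which is what legitimizes dividing by $2\eta N$. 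I would verify that the constants are handled so the averaged gap is dominated by $\frac{1}{\sqrt{N}}\big[\|\mathbf{w}^{(0)}-\mathbf{w}^*\|^2+\frac{\sigma^2(K-M)}{(K-1)M}\big]$, noting that the stated inequality is a slightly loosened envelope of the sharp telescoped expression.
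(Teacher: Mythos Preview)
Your proposal is correct and is essentially the same argument as the paper's. The paper organizes the computation through a virtual full-gradient iterate $\bar{\mathbf{w}}^{(i+1)}=\mathbf{w}^{(i)}-\eta\nabla F(\mathbf{w}^{(i)})$ and splits $\mathbb{E}\|\mathbf{w}^{(i+1)}-\mathbf{w}^*\|^2$ into $\mathbb{E}\|\mathbf{w}^{(i+1)}-\bar{\mathbf{w}}^{(i+1)}\|^2+\|\bar{\mathbf{w}}^{(i+1)}-\mathbf{w}^*\|^2$, but since $\mathbf{w}^{(i+1)}-\bar{\mathbf{w}}^{(i+1)}=-\eta\big(\mathbf{g}^{(i)}-\nabla F(\mathbf{w}^{(i)})\big)$ this is algebraically identical to your direct decomposition $\mathbb{E}\|\mathbf{g}^{(i)}\|^2=\|\nabla F(\mathbf{w}^{(i)})\|^2+\mathbb{E}\|\mathbf{g}^{(i)}-\nabla F(\mathbf{w}^{(i)})\|^2$; the without-replacement variance calculation, the co-coercivity bound $\|\nabla F\|^2\le 2\beta(F-F^*)$, the convexity bound on the inner product, and the telescoping with $\eta=1/\sqrt{N}$ all coincide.
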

\begin{proof}
    See Appendix \ref{proof: convergence rate}.
\end{proof}

Compared with the existing results without scheduling (see e.g., \cite{khaled2019analysis} and \cite{tran2019energy}), the last term in \eqref{Learning Convergence Rate} is new that characterizes the effect of scheduling on learning. One can observe from the term that increasing the number of scheduled devices $M$ leads to the vanishment of the biased term at a rate of $O(\frac{1}{M})$, giving rise to a faster convergence rate, however, at the cost of more sum energy consumption scales \emph{faster than linearly} with $M$.
\section{Extension: Greedy Spectrum Sharing}\label{Extension}
\begin{figure}[t!]
\centering
\includegraphics[width=11cm]{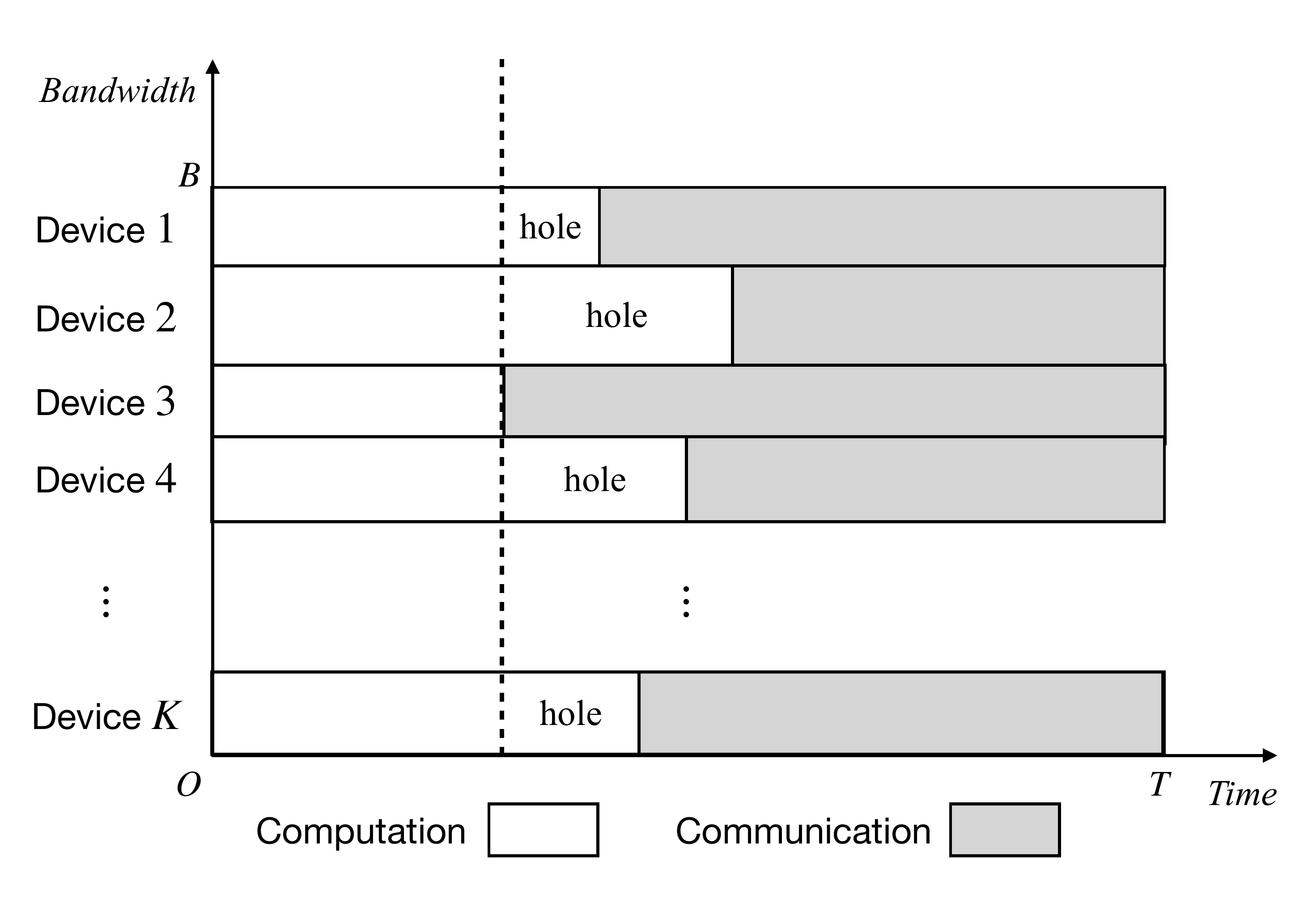}\vspace{-7mm}
\caption{Spectrum holes exist due to heterogeneous $\text{C}^2$ time division.}
\label{Fig: Spectrum Holes}
\end{figure}

In the preceding sections, the bandwidths are allocated at the beginning of each round and then fixed throughout the round. However, as mentioned, the heterogeneity in the computation durations of devices may result in spectrum holes as illustrated in Fig. \ref{Fig: Spectrum Holes}. Scavenging them by spectrum sharing among devices can reduce sum energy, which is the theme of this section. 
While optimizing the spectrum sharing is intractable, we design a practical scheme based on the greedy principle. The scheme allocates the spectrum hole upon its arrival to one of available devices for transmission (in addition to the pre-allocated bandwidth) so as to reduce its energy consumption, called \emph{greedy spectrum sharing}. To design the scheme, we divide the round into multiple time slots. Furthermore, let $\mathcal{S}$ and $\mathcal{S_{\ell}}$ denote the sets of scheduled devices and a subset of devices that have completed computation at the beginning of a particular time slot, respectively. Then, the resulting spectrum holes have the total bandwidth of $B-\sum_{k\in\mathcal{S}_{\ell}}B_k^*$ with $B_k^*$ denoting the pre-allocated bandwidth to device $k$, which is determined at the beginning of each round. Intuitively, all the unoccupied bandwidths should be allocated to the device $k\in \mathcal{S_{\ell}}$ that has the largest energy reduction with respect to bandwidth growth. Mathematically, this selection metric is to select the device, denoted as $k^*$, with the minimum energy-bandwidth rate at $B_k^*$ provided $ \frac{\partial E_k^{\text{cmm}}}{\partial B_k} < 0,\forall k\in\mathcal{K}$, that is 
\begin{align}
 k^* = \arg\min_{k\in \mathcal S_{\ell}}\frac{\partial E_k^{\text{cmm}}}{\partial B_k}\bigg|_{B_k^*}.
\end{align}
However, according to Lemma~\ref{lemma: energy-bandwidth rate}, the energy-bandwidth rates of all scheduled devices are equalized at equilibrium, making this criterion ineffective.  To address this issue, we design an alternative metric that can also effectively reduce sum energy. To this end, we introduce the following notion.
\begin{definition}\label{definition: acceleration}
(Acceleration Rate). \emph{The \emph{acceleration rate} is defined as the partial derivative of the energy-bandwidth rate:
     \begin{equation}\label{definition: phi}
     \phi_k(B_k):=\frac{\partial^2E_k^{\text{cmm}}}{\partial B_k^2}=\frac{2^{\frac{L}{B_kt_k}}L^2N_0(\ln 2)^2}{B_k^3t_kh_k^2} > 0,\quad \forall k\in\mathcal{K}.
    \end{equation}
}
\end{definition}

To be mathematical rigor, we note that the notion of acceleration rate can only be defined by the second derivative. However, it can be proved that the second-order partial derivative here is equivalent to the second derivative of energy w.r.t. the allocated bandwidth (see Appendix~\ref{proof: greedy spectrum sharing}).
A device with a small acceleration rate implies a lower energy-bandwidth rate upon being allocated with extra bandwidths, leading to more energy reduction. Therefore, we propose to adopt the criterion of minimum acceleration rate at $\{B_k = B_k^*\}$, namely $\{\phi_k(B_k^*)\}$, as follows
\begin{align}\label{Eq: device_slection}
\text{(Device Selection)}\quad k^*=\arg\min_{k\in\mathcal{S}_{\ell}}\phi_k(B_k^*).
\end{align}
Based on the above criterion, the scheme is summarized in Algorithm~\ref{algorithm: greedy spectrum sharing}.

\begin{algorithm} [h]
    \caption{Greedy Spectrum Sharing}
    \label{algorithm: greedy spectrum sharing}
    \textbf{Initialization}: Apply Algorithm~\ref{algorithm: one-shot scheduling} and~\ref{Algorithm:joint_RM} in sequential order to obtain $\{\rho_k^*,B_k^*\}$.\\
    Denote $\Delta t$ as the time slot duration and let $t_{\text{count}}=0$.\\
    For the subset of devices $\mathcal{S}=\{k\in\mathcal{K}\mid \rho_k=1\}$:\\
    \textbf{While} $t_{\text{count}}<T$:
    \begin{itemize}
        \item Denote $\mathcal{S}_{\ell}$ as the set of devices that have completed local computation at time $t_{\text{count}}$;
        \item For $k\notin \mathcal{S}_{\ell}$, no bandwidth will be occupied by them;
        \item For $k\in\mathcal{S}_{\ell}$, each device will first be allocated with bandwidth $B_k^*$;
        \begin{itemize}
            \item Calculate $\phi_k$ for each $k\in\mathcal{S}_{\ell}$ using~\eqref{definition: phi};
            \item Select the device via~\eqref{Eq: device_slection} and allocate all the accessable spectrum $B-\sum_{k\in\mathcal{S}_{\ell}}B_k^*$ to it;
        \end{itemize}
        \item $t_{\text{count}}=t_{\text{count}}+\Delta t$.
    \end{itemize}
\end{algorithm}
\section{Experimental results}\label{Simulation}
\subsection{Experiment Setup}
The simulation settings are as follows unless specified otherwise. In the FEEL system, there are $K=50$ devices and each of them is capable of CPU-GPU heterogeneous computing. The devices' CPU and GPU coefficients, $\{C_k\}$ and $\{G_k\}$, are uniformly selected from the the set $\{0.020,0.021,\cdots,0.040\}$ and $\{0.001,0.002,\cdots,0.010\}$, respectively.
Consider an FDMA system with the uplink bandwidth $B=5$ MHz. The channel gains $\{h_k\}$ are modeled as i.i.d. Rayleigh fading with average path loss set as $10^{-3}$. The noise variance is $N_0=10^{-9}$ W/Hz. The classification task aims at classifying handwritten digits from the well-known MNIST dataset. Each device is randomly assigned $20$ samples. The classifier model is implemented using a 6-layer \emph{convolutional neural network} (CNN) which consists of two $5\times 5$ convolution layers with ReLU activation, each followed by $2\times 2$ max pooling, a fully connected layer with 50 units and ReLU activation, and a final softmax output layer. The total number of parameters is $21,840$ and the computation workload is $W = 9.75$ MFLOPs. Furthermore, we suppose that each parameter of the training model gradient is quantized into 16 bits, and as a result, the transmission overhead is $L = 3.49\times 10^5$ bits.


\subsection{$\text{C}^2$ Resource Management}

\subsubsection{Energy-efficient RM} 
\begin{figure}[t!]
    \centering
    \subfigure[Uniform time division]{
    \label{Fig: uniform time division}
    \includegraphics[width=8.0cm]{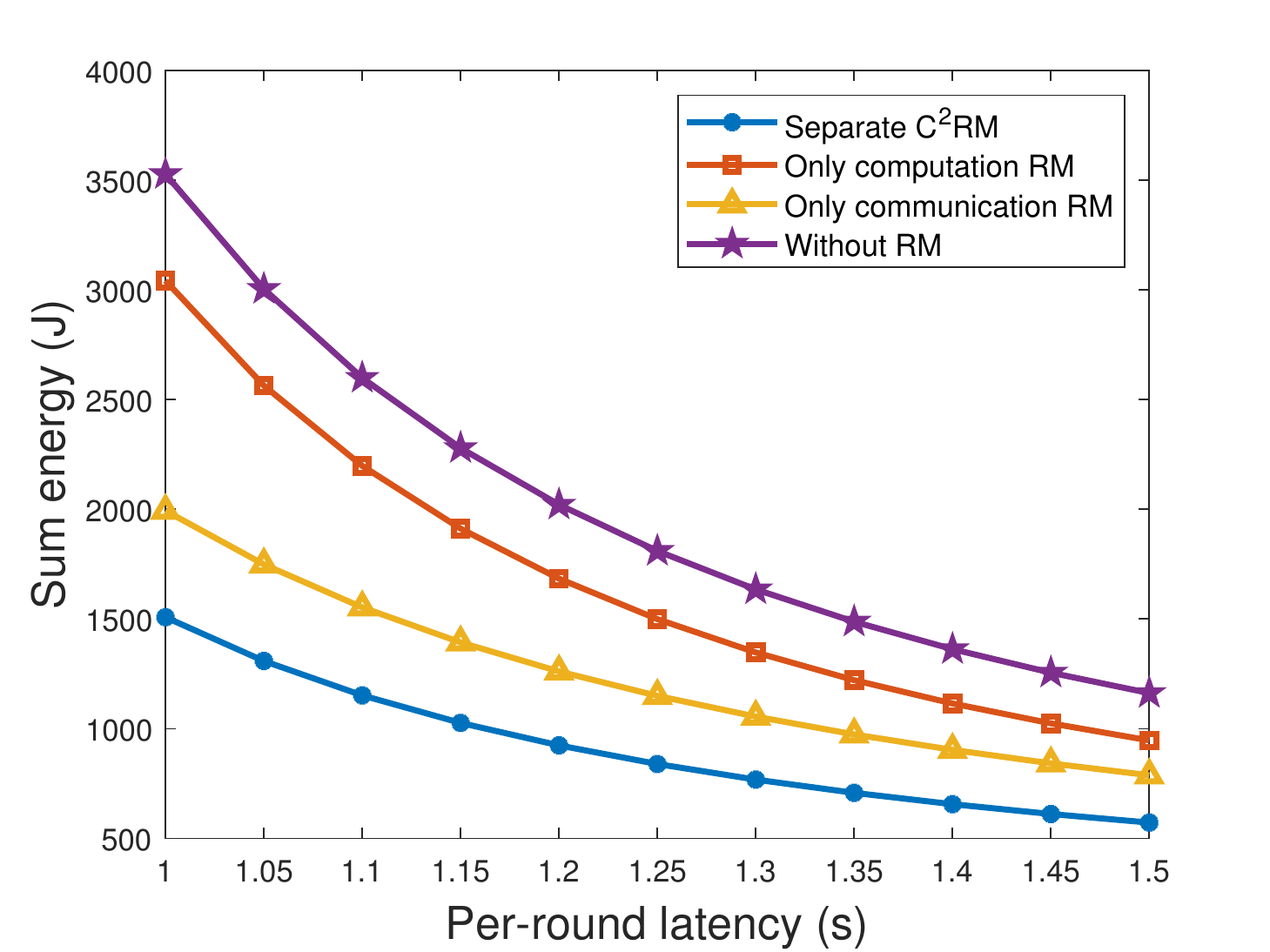}}
    \hspace{-2mm}
    \subfigure[Optimal time division]{
    \label{Fig: optimal time division}
    \includegraphics[width=8.0cm]{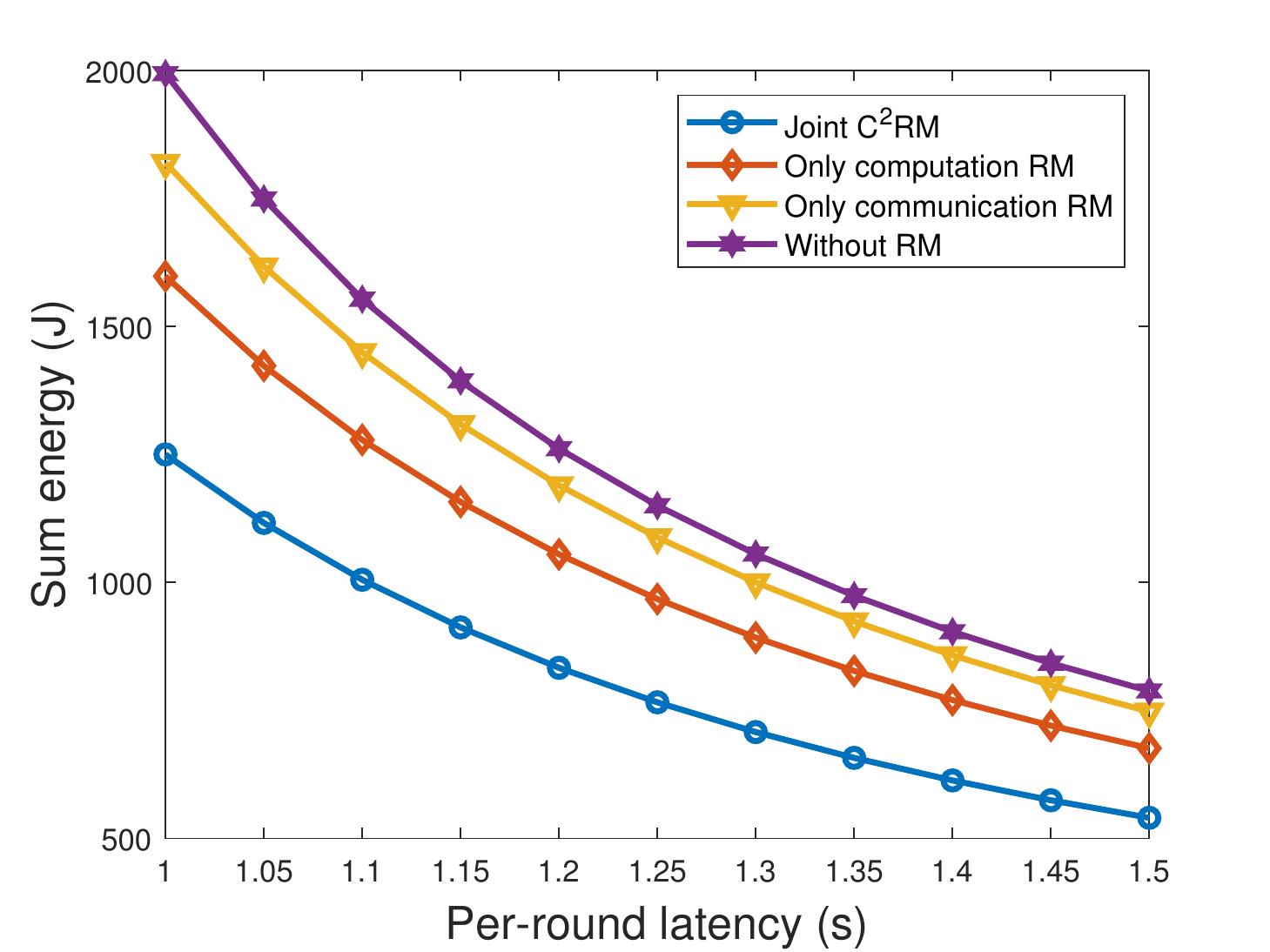}}
    \caption{Comparison of the energy efficiencies of $\text{C}^2$RM, only computation RM, only communication RM, and without RM schemes, with the (a) uniform time division and (b) optimal time divisions, respectively.}
    \label{Fig: C2RM}
\end{figure}

The performance of the proposed computation RM, communication RM and joint $\text{C}^2$RM policies are evaluated by simulations. To be specific, we consider two settings: a) uniform time division [see Fig. \ref{Fig: uniform time division}]; and b) optimal time divisions [see Fig. \ref{Fig: optimal time division}]. Under each setting, we consider four schemes: 1) $\text{C}^2$RM, 2) only computation RM, 3) only communication RM, and 4) without RM. In particular, to distinguish our proposed optimal $\text{C}^2$RM policy, we name the policy with $\text{C}^2$RM under uniform time division setting as ``separate $\text{C}^2$RM" while the proposed optimal one is called ``joint $\text{C}^2$RM". The curves of the sum energy versus the per-round latency $T$ are illustrated in Fig. \ref{Fig: C2RM}. Serveral observations can be made. First, the sum energy reduces as $T$ grows for all cases. This coincides with the results in Lemma \ref{lemma: energy-time rate} that the energy consumption is a monotonically decreasing function in computation and communication time. Second, for either setting a) or b), it can be found that the two schemes 2) and 3), namely only computation RM and only communication RM, outperform the scheme without RM but underperform the $\text{C}^2$RM. For example, under setting a), they reduce the sum energy of the policy without RM by $13.8\%$ and $43.5\%$, respectively, for per-round latency equal to $1.0$ s. Meanwhile, the separate $\text{C}^2$RM scheme reduces the sum energy of the policy without RM by $57.3\%$. These results demonstrate the effectiveness of our proposed workload and bandwidth allocation policies for $\text{C}^2$RM. Third, comparing the results in setting a) with b), we find that the strategy of $\text{C}^2$ time division plays a significant role in energy efficiency. For example, the four schemes 1)-4) in b) with optimal time divisions reduce the sum energy of those in a) by $17.2\%$, $47.5\%$, $9.8\%$, and $43.5\%$, respectively, for per-round latency equal to $1.0$ s. This coincides with Remark 4 that the optimal $\text{C}^2$ time division achieves the best energy efficiency by balancing $\text{C}^2$ heterogeneity. Fourth, our proposed joint $\text{C}^2$RM policy outperforms all other schemes, showing its effectiveness. For example, it reduces the sum energy of the schemes 1)-4) in a) by $17.2\%$, $59.0\%$, $37.4\%$, and $64.6\%$, as well as the schemes 2)-4) in b) by $21.9\%$, $31.3\%$, and $37.4\%$, respectively, for per-round latency equal to $1.0$ s.

\begin{figure}[t!]
\centering
\includegraphics[width=0.55\textwidth]{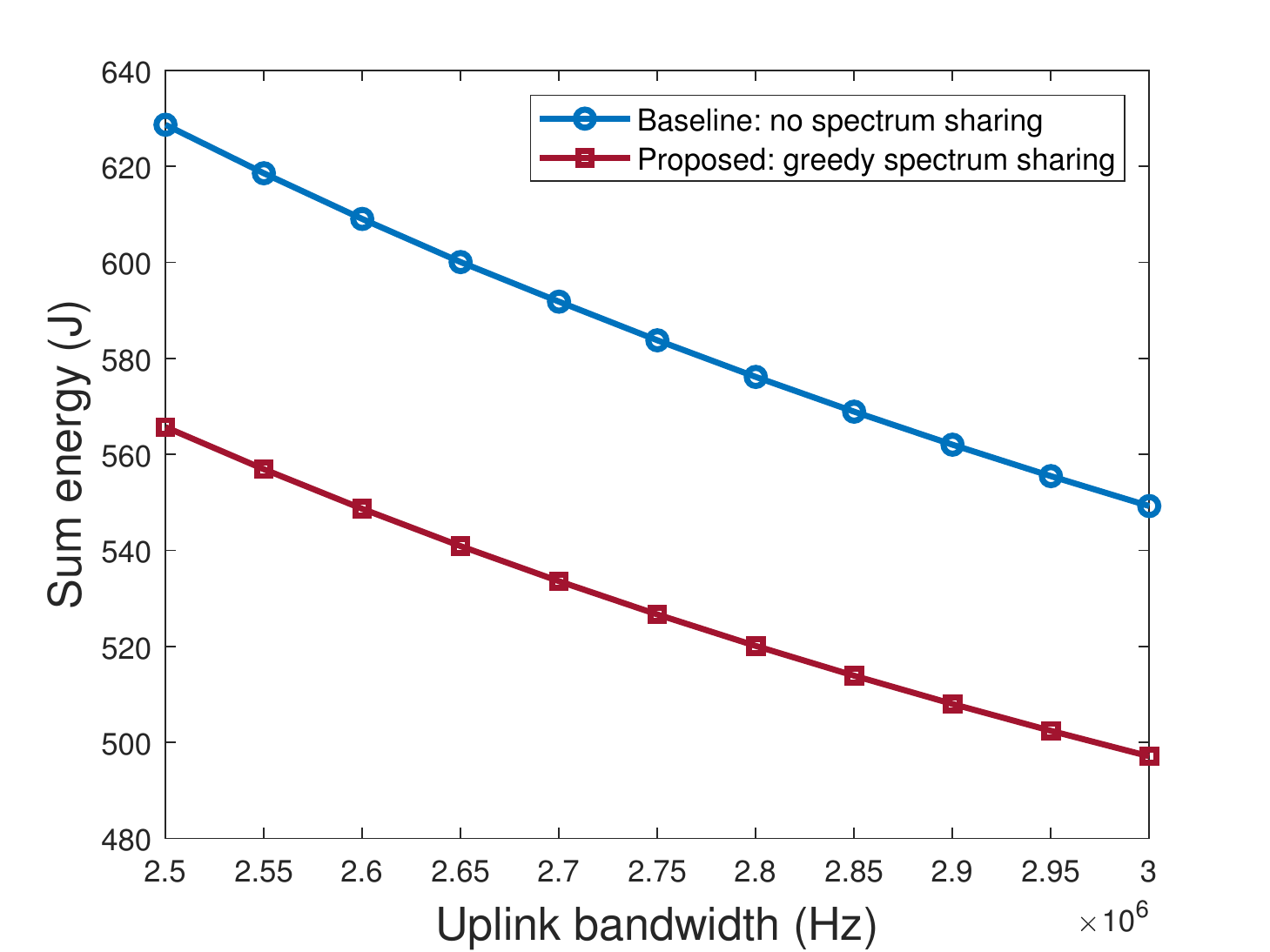}
\caption{Comparison between joint $\text{C}^2$RM with and without greedy spectrum sharing. The lines show sum energy vs. uplink bandwidth with fixed one round time $T=1$ s.}
\label{Fig: spectrum sharing}
\end{figure}

\subsubsection{Greedy spectrum sharing} 
The performance of the proposed greedy spectrum sharing algorithm in Algorithm~\ref{algorithm: greedy spectrum sharing} is benchmarked against the optimal $\text{C}^2$RM without spectrum sharing in Algorithm~\ref{Algorithm:joint_RM}. Note that the number of devices are set to be $K=20$ with high heterogeneity regarding their computation efficiencies and channels. The curves of sum energy versus the bandwidth $B$ are plotted in Fig.~\ref{Fig: spectrum sharing}. Two observations can be made. First, the sum energy reduces as $B$ grows as more bandwidths can be traded for lower transmission power. Second, it can be found that the proposed greedy spectrum sharing policy improves the energy efficiency of the $\text{C}^2$RM without spectrum sharing by scavenging unused radio resources. For example, it reduces sum energy of the baseline scheme, namely optimal $\text{C}^2$RM without spectrum sharing by $10.6\%$ for uplink bandwidth equal to $2.5$ MHz.

\subsection{Device Scheduling}

Consider the scenario that the edge server schedules a subset of devices for learning. The communication round  number is fixed as $10$ with $T=1$ s for each round. The performance of the proposed $\text{C}^2$-aware scheduling is benchmarked against the random selection scheme. The effects of the number of scheduled devices $M$ on the average learning accuracy of the FEEL algorithm and sum energy consumption are shown in Fig.~\ref{Fig:user_selection}.  
Several observations can be made as follows. First, the average learning accuracy is an increasing function of $M$ as it increases the training data per round. 
Second, it can be observed that the increase on $M$ leads to the growth of sum energy at a rate faster than linear, which agrees with the preceding discussion. 
Furthermore, one can observe that the proposed scheduling scheme outperforms the baseline which randomly selects $M$ devices, e.g., achieving $39.8\%$ sum energy reduction for $M=35$.

\begin{figure}[t!]
    \centering
    \includegraphics[width=0.55\textwidth]{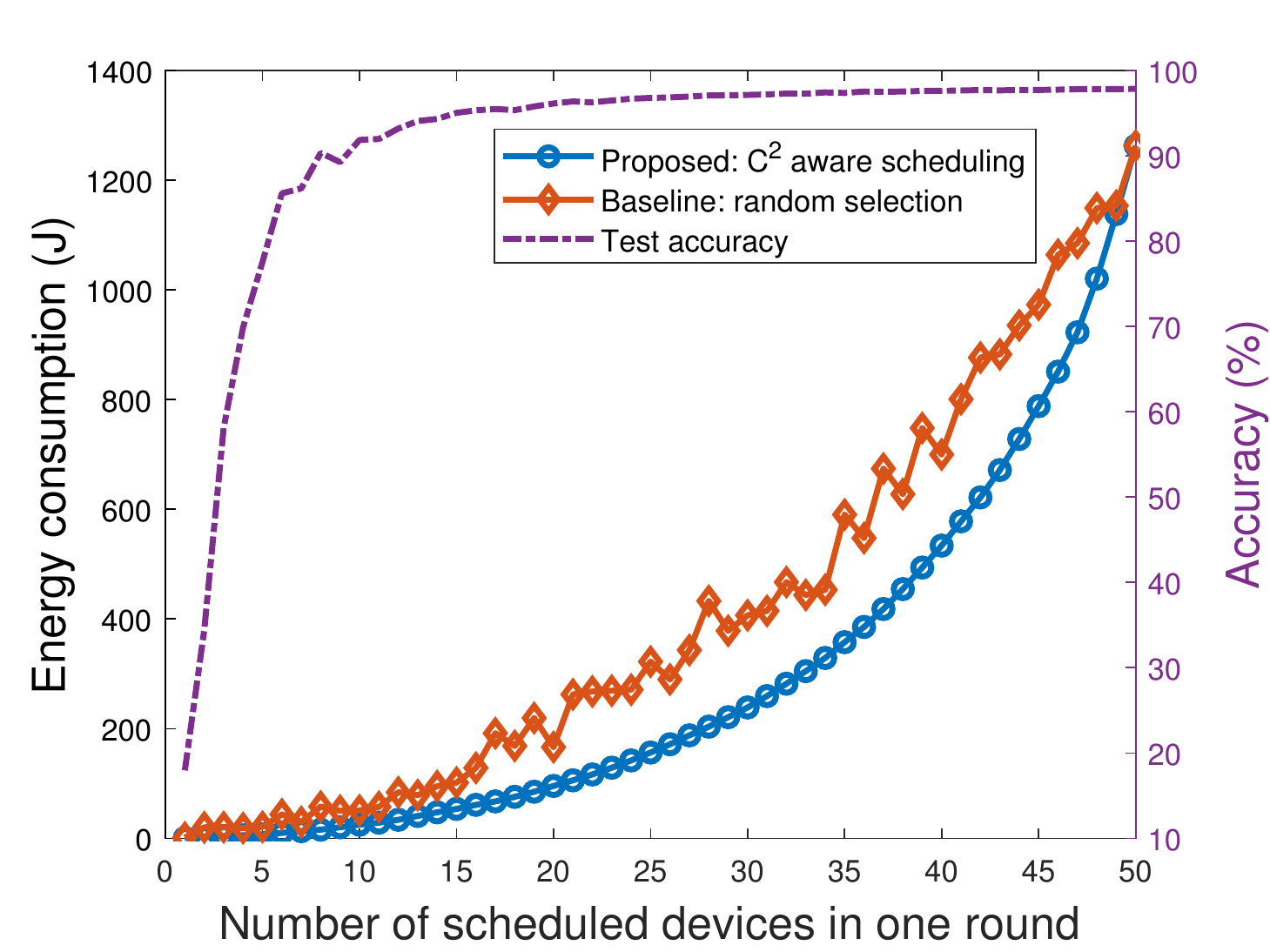}
    \caption{The comparison of the energy efficiencies between $\text{C}^2$ aware scheduling and random selection is shown by solid lines. The relationship between test accuracy and the number of scheduled devices per round is illustrated by the dash line.}
    \label{Fig:user_selection}
\end{figure}

\appendix

\subsection{Proof of Lemma \ref{lemma: computation condition}}\label{proof: computation condition}
Consider two cases for the second constraint in Problem {\bf P2} as follows. 

First, we consider the case that $\frac{W_k}{f_k}\geq \frac{W_k'}{f_k'}, k\in\mathcal{K}$. Then, we know ${t'}_k=\frac{W_k}{f_k}$. Therefore, we have $f_k=\frac{W_k}{{t'}_k}$ and $f_k'\geq \frac{W_k'}{{t'}_k}$.
It follows that the computation energy consumption is 
\begin{align}
    E_k^{\text{cmp}}=C_kW_kf_k^2+G_kW_k'f_k'^2\geq C_kW_k\frac{W_k^2}{{t'}_k^2}+G_kW_k'\frac{W_k'^2}{{t'}_k^2}.
\end{align}
The equality holds if and only if $f_k'=\frac{W_k'}{{t'}_k}$, meaning $\frac{W_k}{f_k}=\frac{W_k'}{f_k'}$. 

This is also true for the case $\frac{W_k}{f_k}\leq \frac{W_k'}{f_k'}$ due to its symmetric form.
\subsection{Proof of Lemma \ref{lemma: energy-load rate}}\label{proof: energy-load rate}
The computation energy of device $k$ can be expressed as
\begin{align}
    E_k^{\text{cmp}}=E_k^{\text{CPU}}(W_k,{t'}_k)+E_k^{\text{GPU}}(W_k',{t'}_k),\quad k\in\mathcal{K}.
\end{align}
The partial Lagrangian function is defined as
\begin{equation}
\begin{aligned}
    &\mathcal{L}(\{W_k\},\{W'_k\},\{\gamma_k\},\{\lambda_k\},\{\theta_k\})\\
    =&\sum_{k=1}^K\left[ E_k^{\text{CPU}}(W_k,{t'}_k)+E_k^{\text{GPU}}(W_k',{t'}_k)+\gamma_k(W_k+{W'}_k-W)-\lambda_kW_k-\theta_k{W'}_k\right],
\end{aligned}
\end{equation}
where $\{\lambda_k\geq 0\}$ and $\{\theta_k\geq 0\}$ are the Lagrange multipliers. Then we have the conditions:
\begin{align}
    \frac{\partial \mathcal{L}}{\partial W_k}=\frac{\partial E_k^{\text{CPU}}}{\partial W_k}+\gamma_k-\lambda_k=0,\quad  \frac{\partial \mathcal{L}}{\partial {W'}_k}=\frac{\partial E_k^{\text{GPU}}}{\partial {W'}_k}+\gamma_k-\theta_k=0,\quad\forall k\in\mathcal{K}.
\end{align}
The complementary slackness conditions give that
\begin{equation}\label{eqn: muW=0 and nuW=0}
    \lambda_kW_k=0,\qquad \theta_k{W'}_k=0,\qquad \forall k\in\mathcal{K}.
\end{equation}
Since at least one processor should be active. 
Assume ${W'}_k>0$ and then we have $\theta_k=0$ from the second condition in \eqref{eqn: muW=0 and nuW=0}. Since the multiplier $\lambda_k$ is required to be $\lambda_k\geq 0$, we know that
\begin{align}\label{RR46}
    \lambda_k=\frac{\partial E_k^{\text{CPU}}}{\partial W_k}-\frac{\partial E_k^{\text{GPU}}}{\partial {W'}_k}\geq 0,\quad\forall k\in\mathcal{K}.
\end{align}
Substituting results \eqref{Def: energy-load rate} of Lemma \ref{lemma: energy-load rate} into (\ref{RR46}), we have 
$W_k\geq\sqrt{\frac{G_k}{C_k}}{W'}_k > 0,~\forall k\in\mathcal{K}.$
Accordingly, we have $W_k>0$ so that $\lambda_k=0$. Therefore, it is optimal for both processors to be active with the energy-workload rate equilibrium \eqref{Equilibrium: energy-load rate} as stated in Lemma \ref{lemma: energy-load rate}.


\subsection{Proof of Lemma \ref{lemma: energy-bandwidth rate}}\label{proof: energy-bandwidth rate}
Substituting $t_k^* = T_k$ into \eqref{Eq:E_upload}, the communication energy of device $k$ can thus be expressed as the function of allocated bandwidths, i.e., $E_k^{\text{cmm}}(B_k)$.
By introducing Lagrange multipliers $\{\mu_k^*\}$ for the inequality constraints $\{B_k\geq 0\}$ and a scalar multiplier $\nu^*$ for the equality constraint $\sum_{k=1}^KB_k=B$, the Lagrangian function is defined as
\begin{equation}
    \mathcal{L}(\{B_k\},\{\mu_k\},\nu)=\sum_{k=1}^K\left[ E_k^{\text{cmm}}(B_k)+\mu_kB_k\right]+\nu\left(\sum_{k=1}^KB_k-B\right),
\end{equation}
Then we have the following conditions:
\begin{align}\label{eqns:KKT}
    B_k^*\geq 0,\quad \sum_{k=1}^KB_k^*=B,\quad \mu_k^*\geq 0,\quad \mu_k^*B_k^*=0, \quad \frac{\partial E_k^{\text{cmm}}}{\partial B_k}\bigg|_{B_k^*}-\mu_k^*+\nu^* = 0, \quad \forall k\in\mathcal{K}.
\end{align}
The feasible requirement gives $B_k^*>0,\forall k\!\in\!\mathcal{K}$, so that $\mu_k^*\!=\!0$. Therefore, $\frac{\partial E_k^{\text{cmm}}}{\partial B_k}\Big|_{B_k^*}\!\!\!=\!-\nu^*,~\forall k\!\in\!\mathcal{K}.$

\subsection{Proof of Lemma~\ref{Pre:updating_rule}}\label{proof: algorithm 1}
To begin with, we introduce several basic properties: 1) $B_k$ is a decreasing convex function of $\nu$ with the form in Lemma \ref{lemma: communication bandwidth allocation}; 2) $\nu_k$ is a decreasing and strictly convex function of $B_k$ with the form in \eqref{eqn: define nu_k}; 3) $\nu_k(B_k)>0$ for all $k$ and all value of $B_k>0$.
The proofs are straightforward and omitted for brevity.

Assume the point $\nu^{(i-1)}>\nu^*$, it follows from property 1) that $B_k(\nu^{(i-1)})<B_k^*$, $\forall k\in\mathcal{K}$. Denote $B_k^{(i)}\triangleq B_k(\nu^{(i-1)})$, and it holds that $\sum_{k=1}^KB_k^{(i)}<\sum_{k=1}^KB_k^*=B$. Therefore, $\text{sgn}\left(\sum_{k=1}^KB_k^{(i)}-B\right)=-1$ and
  $ \widetilde{B}_k^{(i)}=\frac{B}{\sum_{k=1}^KB_k^{(i)}}B_k^{(i)}>B_k^{(i)}.$
By property 2), we know that $\nu_k(\widetilde{B}_k^{(i)})<\nu_k(B_k^{(i)})=\nu^{(i-1)}$, $\forall k\in\mathcal{K}$. Then, we have $\max\limits_{k}\{\nu_k(\widetilde{B}_k^{(i)})\}<\nu^{(i-1)}$. 

Next, we prove by contradiction that $\max\limits_{k}\{\nu_k(\widetilde{B}_k^{(i)})\}>\nu^*$ as follows. 
First, we assume that $\max\limits_{k}\{\nu_k(\widetilde{B}_k^{(i)})\}<\nu^*$ and thus one can have $\nu_k(\widetilde{B}_k^{(i)})<\nu^*,~\forall k\in\mathcal{K}$. Accordingly, we have $\tilde{B}_k^{(i)}>B_k^*,~\forall k\in\mathcal{K}$, which implies that $\sum_{k=1}^K\widetilde{B}_k^{(i)}>\sum_{k=1}^KB_k^*=B$. However, it is invalid as
   $ \sum_{k=1}^K\widetilde{B}_k^{(i)}=\sum_{k=1}^K\frac{B}{\sum_{k=1}^KB_k^{(i)}}B_k^{(i)}=\frac{B}{\sum_{k=1}^KB_k^{(i)}}\sum_{k=1}^KB_k^{(i)}=B.$
Thus the earlier assumption is false. Thereby, we have $\max\limits_{k}\{\nu_k(\widetilde{B}_k^{(i)})\}>\nu^*$. 

Combining the results above, we have $\text{sgn}\left(\sum_{k=1}^KB_k^{(i)}-B\right)=-1$ and $\nu^*<\max\limits_{k}\{\nu_k(\widetilde{B}_k^{(i)})\}<\nu^{(i-1)}$. Following the same procedure, we can derive for the case of $\nu^{(i-1)}<\nu^*$, that $y = \text{sgn}\left(\sum_{k=1}^KB_k^{(i)}-B\right)=1$ and $\nu^{(i-1)}<\min\limits_{k}\{\nu_k(\widetilde{B}_k^{(i)})\}<\nu^*$. 

\subsection{Proof of Theorem \ref{theorem: convergence rate}}\label{proof: convergence rate}
At the beginning of the ($i+1$)-th round, all the devices receive the global model ${\mathbf{w}}^{(i)}$. However, the edge server only schedules $M$ devices for gradient computation, which makes the problem sophisticated to solve. To tackle the challenge, we use a trick that we assume all the devices compute the gradients $\{\nabla F_k({\mathbf{w}}^{(i)})\}$ based on their local datasets while only $M$ of them are aggregated for global model updating. This is equivalent to the learning process in our scenario. Then, the global model update rule can be written as
\begin{equation}
    {\mathbf{w}}^{(i+1)}={\mathbf{w}}^{(i)}-\frac{\eta}{M}\sum_{k\in\mathcal{M}_i}\nabla F_k({\mathbf{w}}^{(i)})=\frac{1}{M}\sum_{k\in\mathcal{M}_i}\left({\mathbf{w}}^{(i)}-\eta\nabla F_k({\mathbf{w}}^{(i)})\right).
\end{equation}
We define the virtual local updated model at device $k$ as $\mathbf{w}_k^{(i+1)}$ and denote it as
\begin{align}
    \mathbf{w}_k^{(i+1)}={\mathbf{w}}^{(i)}-\eta\nabla F_k({\mathbf{w}}^{(i)}).
\end{align}
As a result, the global update rule is equivalent to 
\begin{equation}
    {\mathbf{w}}^{(i+1)}=\frac{1}{M}\sum_{k\in\mathcal{M}_i}\mathbf{w}_k^{(i+1)}.
\end{equation}
Next, we introduce another virtual sequence as the average model over all virtual local updates
\begin{equation}
    \bar{\mathbf{w}}^{(i+1)}=\frac{1}{K}\sum_{k=1}^K\mathbf{w}_k^{(i+1)}.
\end{equation}
Accordingly, we have
\begin{equation}\label{RR53}
    \bar{\mathbf{w}}^{(i+1)}={\mathbf{w}}^{(i)}-\frac{\eta}{K}\sum_{k=1}^K\nabla F_k({\mathbf{w}}^{(i)}).
\end{equation}
The averaged virtual model shifts at the end of one round is
\begin{equation}
\begin{aligned}
    &\frac{1}{K}\sum_{k=1}^K\left\|\mathbf{w}_k^{(i+1)}-\bar{\mathbf{w}}^{(i+1)}\right\|^2=\frac{\eta^2}{K}\sum_{k=1}^K\left\|\nabla F_k({\mathbf{w}}^{(i)})-\frac{1}{K}\sum_{k=1}^K\nabla F_k({\mathbf{w}}^{(i)})\right\|^2\\
    &=\frac{\eta^2}{K}\sum_{k=1}^K\left\|\nabla F_k({\mathbf{w}}^{(i)})-\nabla F({\mathbf{w}}^{(i)})\right\|^2=\eta^2\mathbb{E}\left[\left\|\nabla F_k({\mathbf{w}}^{(i)})-\nabla F({\mathbf{w}}^{(i)})\right\|^2\right]\leq \eta^2\sigma^2.
\end{aligned}
\end{equation}
We further notice that $\|{\mathbf{w}}^{(i)}-\mathbf{w}^*\|^2=\|{\mathbf{w}}^{(i)}-\bar{\mathbf{w}}^{(i)}+\bar{\mathbf{w}}^{(i)}-\mathbf{w}^*\|^2$, and thus we have
\begin{align}
    &\|{\mathbf{w}}^{(i)}-\mathbf{w}^*\|^2=\|{\mathbf{w}}^{(i)}-\bar{\mathbf{w}}^{(i)}\|^2+\|\bar{\mathbf{w}}^{(i)}-\mathbf{w}^*\|^2+2\langle{\mathbf{w}}^{(i)}-\bar{\mathbf{w}}^{(i)},\bar{\mathbf{w}}^{(i)}-\mathbf{w}^*\rangle
\end{align}
Due to the fact 
    $\mathbb{E}_{\mathcal{M}_i}\langle{\mathbf{w}}^{(i)}-\bar{\mathbf{w}}^{(i)},\bar{\mathbf{w}}^{(i)}-\mathbf{w}^*\rangle=0$,
we have
\begin{align}
    \mathbb{E}_{\mathcal{M}_i}\|{\mathbf{w}}^{(i)}-\mathbf{w}^*\|^2=\mathbb{E}_{\mathcal{M}_i}\|{\mathbf{w}}^{(i)}-\bar{\mathbf{w}}^{(i)}\|^2+\mathbb{E}_{\mathcal{M}_i}\|\bar{\mathbf{w}}^{(i)}-\mathbf{w}^*\|^2.\label{r1:3}
\end{align}
The first term in (\ref{r1:3}) can be decomposed into the following two terms.
\begin{equation}\label{r2:1}
\begin{aligned}
    &\|{\mathbf{w}}^{(i)}-\bar{\mathbf{w}}^{(i)}\|^2=\left\|\frac{1}{M}\sum_{k\in\mathcal{M}_i}\mathbf{w}_k^{(i)}-\bar{\mathbf{w}}^{(i)}\right\|^2=\frac{1}{M^2}\left\|\sum_{k\in\mathcal{M}_i}\left(\mathbf{w}_k^{(i)}-\bar{\mathbf{w}}^{(i)}\right)\right\|^2\\
    &=\frac{1}{M^2}\left(\sum_{k\in\mathcal{M}_i}\|\mathbf{w}_k^{(i)}-\bar{\mathbf{w}}^{(i)}\|^2+\sum_{\substack{k,l\in\mathcal{M}_i\\k\neq l}}\langle \mathbf{w}_k^{(i)}-\bar{\mathbf{w}}^{(i)},\mathbf{w}_l^{(i)}-\bar{\mathbf{w}}^{(i)}\rangle\right).
\end{aligned}
\end{equation}
Taking expectation on the first term in (\ref{r2:1}), we have
\begin{equation}\label{r2:2}
\begin{aligned}
    &\mathbb{E}_{\mathcal{M}_i}\left[\sum_{k\in\mathcal{M}_i}\|\mathbf{w}_k^{(i)}-\bar{\mathbf{w}}^{(i)}\|^2\right]=\sum_{\substack{\mathcal{M}\subseteq\mathcal{K}\\|\mathcal{M}|=M}}{\rm Pr}(\mathcal{M}_i=\mathcal{M})\sum_{k\in\mathcal{M}_i}\|\mathbf{w}_k^{(i)}-\bar{\mathbf{w}}^{(i)}\|^2\\
    &=\sum_{\substack{\mathcal{M}\subseteq\mathcal{K}\\|\mathcal{M}|=M}}{\rm Pr}(\mathcal{M}_i=\mathcal{M})\sum_{k\in\mathcal{K}}{\rm Pr}(k\in\mathcal{M}_i)\|\mathbf{w}_k^{(i)}-\bar{\mathbf{w}}^{(i)}\|^2\\
    &=\frac{\binom{K-1}{M-1}}{\binom{K}{M}}\sum_{k\in\mathcal{K}}\|\mathbf{w}_k^{(i)}-\bar{\mathbf{w}}^{(i)}\|^2=\frac{M}{K}\sum_{k=1}^K\|\mathbf{w}_k^{(i)}-\bar{\mathbf{w}}^{(i)}\|^2.
\end{aligned}
\end{equation}
Taking expectation on the second term in (\ref{r2:1}), we have
\begin{equation}\label{r2:3}
\begin{aligned}
    &\mathbb{E}_{\mathcal{M}_i}\left[\sum_{\substack{k,l\in\mathcal{M}_i\\k\neq l}}\langle \mathbf{w}_k^{(i)}-\bar{\mathbf{w}}^{(i)},\mathbf{w}_l^{(i)}-\bar{\mathbf{w}}^{(i)}\rangle\right]=\sum_{\substack{\mathcal{M}\subseteq\mathcal{K}\\|\mathcal{M}|=M}}{\rm Pr}(\mathcal{M}_i=\mathcal{M})\sum_{\substack{k,l\in\mathcal{M}_i\\k\neq l}}\langle \mathbf{w}_k^{(i)}-\bar{\mathbf{w}}^{(i)},\mathbf{w}_l^{(i)}-\bar{\mathbf{w}}^{(i)}\rangle\\
    &=\frac{M(M-1)}{K(K-1)}\left(\left\|\sum_{k=1}^K(\mathbf{w}_k^{(i)}-\bar{\mathbf{w}}^{(i)})\right\|^2-\sum_{k=1}^K\left\|\mathbf{w}_k^{(i)}-\bar{\mathbf{w}}^{(i)}\right\|^2\right)=-\frac{M(M-1)}{K(K-1)}\sum_{k=1}^K\left\|\mathbf{w}_k^{(i)}-\bar{\mathbf{w}}^{(i)}\right\|^2.
\end{aligned}
\end{equation}
Combining (\ref{r2:2}) and (\ref{r2:3}), we have the result as follow:
\begin{equation}\label{r7}
    \mathbb{E}_{\mathcal{M}_i}\|{\mathbf{w}}^{(i)}-\bar{\mathbf{w}}^{(i)}\|^2=\frac{K-M}{MK(K-1)}\sum_{k=1}^K\left\|\mathbf{w}_k^{(i)}-\bar{\mathbf{w}}^{(i)}\right\|^2\leq \frac{K-M}{M(K-1)}\times \eta^2\sigma^2.
\end{equation}
Then, we consider the second term in (\ref{r1:3}). According to the update rule (\ref{RR53}), we know
\begin{equation}\label{r11}
\begin{aligned}
    &\|\bar{\mathbf{w}}^{(i+1)}-\mathbf{w}^*\|^2
    =\left\|{\mathbf{w}}^{(i)}-\frac{\eta}{K}\sum_{k=1}^K\nabla F_k({\mathbf{w}}^{(i)})-\mathbf{w}^*\right\|^2\\
    &=\|{\mathbf{w}}^{(i)}-\mathbf{w}^*\|^2+\eta^2\left\|\frac{1}{K}\sum_{k=1}^K\nabla F_k({\mathbf{w}}^{(i)})\right\|^2-\frac{2\eta}{K}\sum_{k=1}^K\langle \nabla F_k({\mathbf{w}}^{(i)}), {\mathbf{w}}^{(i)}-\mathbf{w}^*\rangle.
\end{aligned}
\end{equation}
According to the smoothness property in Assumption \ref{assumption: convexity and smoothness}, the second term in (\ref{r11}) is bounded with
\begin{align}\label{r12}
    \left\|\frac{1}{K}\sum_{k=1}^K\nabla F_k({\mathbf{w}}^{(i)})\right\|^2=\left\|\nabla F({\mathbf{w}}^{(i)})\right\|^2\leq 2\beta\left(F({\mathbf{w}}^{(i)})-F(\mathbf{w}^*)\right).
\end{align}
According to the convexity property in Assumption \ref{assumption: convexity and smoothness}, we know that
\begin{align}
    \langle \nabla F_k({\mathbf{w}}^{(i)}),\mathbf{w}^*- {\mathbf{w}}^{(i)}\rangle\leq F_k(\mathbf{w}^*)-F_k({\mathbf{w}}^{(i)}).
\end{align}
Therefore, the third term in (\ref{r11}) is bounded with
\begin{equation}\label{r15}
-\frac{2\eta}{K}\sum_{k=1}^K\langle \nabla F_k({\mathbf{w}}^{(i)}), {\mathbf{w}}^{(i)}-\mathbf{w}^*\rangle\leq\frac{2\eta}{K}\sum_{k=1}^K\left(F_k(\mathbf{w}^*)-F_k({\mathbf{w}}^{(i)})\right)=2\eta\left(F(\mathbf{w}^*)-F({\mathbf{w}}^{(i)})\right).
\end{equation}
Combining the results in (\ref{r11}), (\ref{r12}) and (\ref{r15}), we obtain that
\begin{equation}\label{r19}
\begin{aligned}
    &\|\bar{\mathbf{w}}^{(i+1)}-\mathbf{w}^*\|^2=\|{\mathbf{w}}^{(i)}-\mathbf{w}^*\|^2+\eta^2\left\|\frac{1}{K}\sum_{k=1}^K\nabla F_k({\mathbf{w}}^{(i)})\right\|^2-\frac{2\eta}{K}\sum_{k=1}^K\langle \nabla F_k({\mathbf{w}}^{(i)}), {\mathbf{w}}^{(i)}-\mathbf{w}^*\rangle\\
    &\leq \|{\mathbf{w}}^{(i)}-\mathbf{w}^*\|^2+2\eta^2\beta\left(F({\mathbf{w}}^{(i)})-F(\mathbf{w}^*)\right)+2\eta\left(F(\mathbf{w}^*)-F({\mathbf{w}}^{(i)})\right)\\
    &=\|{\mathbf{w}}^{(i)}-\mathbf{w}^*\|^2-2\eta(1-\eta\beta)\left(F({\mathbf{w}}^{(i)})-F(\mathbf{w}^*)\right).
\end{aligned}
\end{equation}
Assume $\eta<\frac{1}{\beta}$, so that $1-\eta\beta>0$. In addition, due to $\mathbf{w}^*=\arg\min_{\mathbf{w}}F(\mathbf{w})$, it is obvious that $F({\mathbf{w}}^{(i)})- F(\mathbf{w}^*)\geq 0$. Combining all the results above, we obtain the upper bound for (\ref{r1:3}) as
\begin{equation}\label{Rr64}
\begin{aligned}
    &\mathbb{E}_{\mathcal{M}_{i}}\left[\|{\mathbf{w}}^{(i+1)}-\mathbf{w}^*\|^2\right]=\mathbb{E}_{\mathcal{M}_{i}}\left[\|{\mathbf{w}}^{(i+1)}-\bar{\mathbf{w}}^{(i+1)}\|^2\right]+\mathbb{E}_{\mathcal{M}_{i}}\left[\|\bar{\mathbf{w}}^{(i+1)}-\mathbf{w}^*\|^2\right]\\
    &\leq \frac{\eta^2\sigma^2(K-M)}{M(K-1)}+\mathbb{E}_{\mathcal{M}_i}\left[\|{\mathbf{w}}^{(i)}-\mathbf{w}^*\|^2\right]-2\eta(1-\eta\beta)\mathbb{E}_{\mathcal{M}_i}\left[F({\mathbf{w}}^{(i)})-F(\mathbf{w}^*)\right].
\end{aligned}
\end{equation}
To ease the notation, denote that
\begin{align}
    a_{i}=\mathbb{E}_{\mathcal{M}_i}\left[\|{\mathbf{w}}^{(i)}-\mathbf{w}^*\|^2\right],\quad 
    b=\frac{\sigma^2(K-M)}{M(K-1)},\quad
    e_i=\mathbb{E}_{\mathcal{M}_i}\left[F({\mathbf{w}}^{(i)})-F(\mathbf{w}^*)\right]
\end{align}
Then \eqref{Rr64} becomes 
\begin{align}
    a_{i+1}\leq a_i+\eta^2b-2\eta(1-\eta\beta) e_i\label{r23}
\end{align}
Re-write (\ref{r23}) and arrange it in the following form:
\begin{gather}
    e_i\leq \frac{a_{i}- a_{i+1}+\eta^2b}{2\eta(1-\eta\beta)}.
\end{gather}
Thereby, the result in Theorem \ref{theorem: convergence rate} can be derived after taking expectation over rounds:
\begin{equation}
    \frac{1}{N}\sum_{i=0}^{N-1}e_i
    \leq\frac{\sum_{i=0}^{N-1} (a_{i}- a_{i+1})}{2\eta(1-\eta\beta)N}+\frac{\eta b}{2(1-\eta\beta)}\leq\frac{a_{0}}{2\eta(1-\eta\beta)N}+\frac{\eta b}{2(1-\eta\beta)}.
\end{equation}
If we further set the learning rate as $\eta=\frac{1}{\sqrt{N}}\leq\frac{1}{2\beta}$, we have
\begin{equation}
    \frac{1}{N}\sum_{i=0}^{N-1}e_i
    \leq\frac{a_{0}}{\eta N}+\eta b=\frac{1}{\sqrt{N}}\left[\|\mathbf{w}^{(0)}-\mathbf{w}^*\|^2+\frac{\sigma^2(K-M)}{(K-1)M}\right].
\end{equation}

\subsection{Clarification of Definition \ref{definition: acceleration}}\label{proof: greedy spectrum sharing}
The energy consumption of device $k$ can be expressed as
\begin{equation}
    E_k=E_k^{\text{cmp}}(W_k,W_k',{t'}_k)+E_k^{\text{cmm}}(B_k,t_k)
\end{equation}
with $W_k+W_k'=W$ and $t_k'+t_k=T$.

Since $W_k$ and $W'_k$ are irrelevant to the allocated bandwidth $B_k$, the derivative of energy with respect to bandwidth can be calculated as follows:
\begin{equation}
\begin{aligned}
    \frac{dE_k}{dB_k}&=\frac{\partial E_k^{\text{cmm}}}{\partial B_k}+\frac{\partial E_k^{\text{cmp}}}{\partial {t'}_k}\frac{d{t'}_k}{dB_k}+\frac{\partial E_k^{\text{cmm}}}{\partial t_k}\frac{dt_k}{dB_k}\\
    &=\frac{\partial E_k^{\text{cmm}}}{\partial B_k}+\xi_k\left(\frac{d{t'}_k}{dB_k}+\frac{dt_k}{dB_k}\right)=\frac{\partial E_k^{\text{cmm}}}{\partial B_k},
\end{aligned}
\end{equation}
where the energy-time rate equilibrium (see Lemma 5) gives $\frac{\partial E_k^{\text{cmp}}}{\partial {t'}_k}=\frac{\partial E_k^{\text{cmm}}}{\partial t_k}\triangleq \xi_k$, and the time constraint, namely ${t'}_k+t_k=T$, gives $\frac{d{t'}_k}{dB_k}+\frac{dt_k}{dB_k}=0$.

Then, we can derive the second derivative of energy with respect to bandwidth as follows:
\begin{equation}
\begin{aligned}
\frac{d^2E_k}{dB_k^2}&=\frac{d}{dB_k}\frac{\partial E_k^{\text{cmm}}}{\partial B_k}=\frac{\partial^2E_k^{\text{cmm}}}{\partial B_k^2}+\frac{\partial^2E_k^{\text{cmp}}}{\partial B_k\partial t'_k}\frac{dt'_k}{dB_k}+\frac{\partial^2E_k^{\text{cmm}}}{\partial B_k\partial t_k}\frac{dt_k}{dB_k}\\
&=\frac{\partial^2E_k^{\text{cmm}}}{\partial B_k^2}+\frac{\partial}{\partial B_k}\left(\frac{\partial E_k^{\text{cmp}}}{\partial t'_k}-\frac{\partial E_k^{\text{cmm}}}{\partial t_k}\right)\frac{dt'_k}{dB_k}=\frac{\partial^2E_k^{\text{cmm}}}{\partial B_k^2}.
\end{aligned}
\end{equation}


\vspace{2mm}

\bibliography{Energy_Efficient.bib}
\bibliographystyle{IEEEtran}

\end{document}